\newtheorem{thm}{Theorem}[section]
\newtheorem{cor}{Corollary}[section]
\newtheorem{defn}{Definition}[section]
\newtheorem{pro}{Proposition}[section]
\title {On the Capital Allocation Problem for a New Coherent Risk Measure in Collective Risk Theory}
\author{Hirbod Assa \footnote{Mailing address: Hirbod Assa. Institute for Financial and Actuarial Mathematics. University of Liverpool. UK. Email: assa@liverpool.ac.uk}\\
University of Liverpool \and
Manuel Morales \footnote{Mailing address: Manuel Morales. Department of Mathematics and Statistics. University of Montreal. CP. 6128 succ. centre-ville. Montreal, Quebec. H3C 3J7. CANADA. Email: morales@dms.umontreal.ca}\\
University of Montreal \and
Hassan Omidi Firouzi \footnote{Mailing address: Hassan Omidi Firouzi. Department of Mathematics and Statistics. University of Montreal. CP. 6128 succ. centre-ville. Montreal, Quebec. H3C 3J7. CANADA. Email: omidifh@dms.umontreal.ca}\\
University of Montreal 
}
\date{{\scriptsize First draft: February~8, 2013. This version: \today.}}
\begin{document}
\maketitle
\begin{abstract}
In this paper we introduce a new coherent cumulative risk measure on  $\mathcal{R}_L^p$, the space of c\`adl\`ag processes having Laplace transform. This new coherent risk measure turns out to be tractable enough within a class of models where the aggregate claims is driven by a spectrally positive L\'evy process. Moreover, we study the problem of capital allocation in an insurance context and we show that the capital allocation problem for this risk measure has a unique solution determined by the Euler allocation method. Some examples are provided. \\

\noindent \textbf{Keywords.} Capital allocation, Euler allocation method, Coherent risk measures, L\' evy insurance processes, Risk measures on the space of stochastic processes.
 \end{abstract}
 
\section{Introduction}
Collective risk theory has built upon the pioneering work of Filip Lundberg \cite{Cramer} and it now comprises a substantial body of knowledge that concerns itself with the study the riskiness of an insurer's reserve as measured by the ruin probability and related quantities \cite{Asmussen}. A large amount of literature now exists on such insolvency measures for a wide variety of models, the latest being the so-called L\' evy insurance risk models \cite{Biffis_Morales} and \cite{Biffis_Kyprianou}.
 
Traditionally, risk theory focuses on the insurer's ability to manage the solvency of its reserve through the control of initial investment $x$. The mathematical tool often cited for such task is the probability of ruin since it is a measure of the likelihood that an insurer's reserve would eventually be insufficient to cover its liabilities in the long run. 

More precisely, consider the following general model for the risk reserve of an insurance company,
\begin{equation}\label{risk}
R(t) = x + c\,t - X(t) \;,\quad t \geq 0\;,
\end{equation}
where the aggregate claims process $X$ is a spectrally positive L\'evy process with zero drift, with $X(0)=0$ and jump measure denoted by $\nu$. Moreover, $x$ is the initial reserve level and $c$ is a constant premium rate defined as
\begin{equation}\label{Loading}
c = (1+\theta) \mathbb{E}[X(1)]
\end{equation}
  where $\theta > 0$ is the security loading factor.

Then the associated ruin time is 
\begin{equation}\label{ruin_time}
\tau_x := \inf\{t \geq 0 \;|\; X(t) - c\,t \geq x \} \;,
\end{equation}
and the {\it infinite-horizon ruin probability} can be defined by
\begin{equation}\label{ruin}
 \psi(x) := \mathbb{P}^x(\tau_x<\infty) \;,
\end{equation}
where $\mathbb{P}^x$ is short-hand notation for $\mathbb P( \;\cdot \;|\; X(0)=x)$. 

Much of the literature in collective risk theory studies the problem of deriving expressions and reasonable approximations for the probability of ruin as a function of the initial reserve level $x$. This problem is addressed within an ever-growing set of models for the aggregate claims process. See \cite{Asmussen} for a thorough account on the so-called ruin theory. 

Naturally, the ruin probability $\psi$ quantifies the solvency of the net-loss process $Y_t:=X_t-ct$ as a function of the initial reserve level $x$. In fact, we can define a risk measure $\rho_{\beta} : \mathcal X \longrightarrow [0,1]$ on a suitable model space $\mathcal X$ (say the space of bounded c\`adl\`ag stochastic processes $\mathcal R^{\infty}$). Let $Y_t=ct-X_t$ be the net-loss process associated with the reserve process (\ref{risk}), then 
\begin{equation}\label{ruin_measure}
\rho_{\beta}(Y)\longmapsto a := \inf \{x \geq 0 \;| \; \psi(x) \leq \beta\}\;,
\end{equation}
where $\psi$ is the associated ruin probability (\ref{ruin}) and $\beta \in [0,1]$ represents a given tolerance to ruin. 

One can interpret $a$ as the smallest initial level for which the process $R$ has an acceptable risk level, i.e. its associated ruin probability is less or equal to a tolerable figure $\beta$. Such risk measures have been recently studied (see \cite{Trufin}) and although they exhibit interesting properties, they lack the tractability of an efficient risk management tool. In fact, any meaningful risk management application, such as capital allocation, would be hard to implement using (\ref{ruin_measure}).

In this paper, we recover this idea of measuring the risk of an insurance risk process and we define a coherent risk measure on the space of c\`adl\`ag processes having Laplace transform as a mapping $\rho:\mathcal{R}_L^p \longrightarrow \mathbb R_+$. Unlike (\ref{ruin_measure}), this measure is tractable enough and allows for a solution of the capital allocation problem. This is carried out within the framework given by the theory of coherent and convex risk measures defined on a suitable space of stochastic processes. Among previous works on these issues we find \cite{Cheridito1} and \cite{Cheridito2} where the authors work out risk measures on the space of random processes modeling the outcome of a certain financial position and \cite{Cheridito3} where they develop risk measures in a dynamic fashion.   

The contribution of this paper is two-fold. First, based on \cite{Ahmadi} and \cite{Assa2}, we design a new risk measure on the space of bounded c\` adl\`ag processes that can capture the risk associated with the path-properties of an insurance model. We do this by extending the notion of {\it Entropic Value at Risk}, first introduced in \cite{Ahmadi}, to a suitable space of stochastic processes. Second, we explore the capital allocation problem using this new risk measure in an insurance context and we show that the Euler allocation method is the only method to allocate the requiring capital for this risk measure. 
 
 

The outline of the paper is as follows. In Section 2, we introduce the notion of {\it Cumulative Entropic Value at Risk} (CEVaR${}_{1-\beta}$) as a coherent risk measure on the space of bounded stochastic processes and we explore some of its relevant features. In Section 3, we explore the capital allocation problem and give a theorem which characterizes the capital allocation set for these measures. In fact, we show that for the CEVaR${}_{1-\beta}$ risk measure the Euler allocation method is the only way to allocate the risk capital. Finally, in Section 4, we show some results for CEVaR${}_{1-\beta}$ and provide some examples.

\section{Cumulative Entropic Risk Measures}

Let $(\Omega, \mathcal F, \mathbb P, \bar{\mathcal F})$ be a filtered probability space. We consider the space $\mathcal{R}^p$ of stochastic processes on $[0, T]$ that are c\` adl\` ag, adapted and such that $X^*:=\mathop {\sup}_{[0,T]} |X_t| \in L^p(\Omega,\mathcal F)$, with $1\leq p\leq \infty$. 
 Furthermore, assume that $L^1(\Omega, \mathcal{F},\mathbb P)$ has a countable dense subset. In \cite{Cheridito1} and \cite{Cheridito2} the authors developed the theory of convex risk measures on the space of $\mathcal{R}^p$ ($\rho:\mathcal{R}^p \longrightarrow \mathbb R_+$). 
Notice that, for any $1\leq p\leq \infty$, the space $\mathcal{R}^p$ endowed with the norm $||X||_{\mathcal{R}^p} = ||X^*||_{L^p}$, is a Banach space.
 
\begin{defn}
We define the subspace $\mathcal R_L^p$ containing the processes in $\mathcal R^p$ which have Laplace transform, i.e. $X\in\mathcal R^p$ belongs to $\mathcal R_L^p$ if and only if 
$$ m_t(s)=\mathbb E [\exp(-s\,X_t)]<\infty \;,\qquad s\geq 0 \;,$$
for $t\in[0,T]$. 
\end{defn}

The idea we put forward in this paper is to use a {\it cumulative risk measure} based on the {\it Entropic Value at Risk} that was defined in \cite{Ahmadi}. That is, following \cite {Assa1}, we measure the risk of a random process $X\in \mathcal{R}^p_L $ by defining a {\it cumulative risk measure} $\rho: \mathcal{R}^p_L \longrightarrow \mathbb R_+$ as follows. Let $\rho_{0}$ be a given risk measure on $L^p(\Omega, \mathcal F)$, i.e. $\rho_{0}: L^p(\Omega, \mathcal F)\longrightarrow \mathbb R$, and let $\omega: [0,T] \longrightarrow \mathbb{R}^+$ be a suitable weight function, i.e. $\int_0^T \omega(t) dt = 1$. Then we can define a cumulative risk measure $\rho: \mathcal{R}^p_L \longrightarrow \mathbb R_+$ based on $\rho_0$ as the weighted aggregate risk of a random process $X\in \mathcal R^p_L$. More precisely, 
\begin{equation}\label{cumulative_risk}
\rho(X) := \int_0^T \rho_0(X_t) \omega (t) dt\;.
\end{equation}
Such constructions were proposed and studied in \cite{Assa1}. The features of such measures inherently depend on the choice of base risk measure $\rho_0$. In fact, if the risk measure $\rho_0$ is coherent then $\rho$ in $(\ref{cumulative_risk})$ is coherent as well. 

\begin{thm}\label{Coherent}
Let $\rho_0$ be a coherent risk measure on $L^{\infty}(\Omega, \mathcal F)$. Then the risk measure $\rho: \mathcal{R}^p_L \longrightarrow \mathbb R_+$, given in $(\ref{cumulative_risk})$, is a coherent risk measure on the space $ \mathcal{R}_L^p$.
\end{thm}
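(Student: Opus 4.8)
The plan is to verify the four defining properties of a coherent risk measure---monotonicity, translation invariance, positive homogeneity, and subadditivity---for $\rho$, in each case reducing the property of $\rho$ to the same property of the base measure $\rho_0$ via the linearity and positivity of the Lebesgue integral and the normalization $\int_0^T \omega(t)\,dt = 1$. Once the map $(\ref{cumulative_risk})$ is known to be well defined, all four verifications are one-line consequences; the only genuine content of the theorem is the well-definedness of the integrand, which is where I would concentrate the work.

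First I would show that $t \mapsto \rho_0(X_t)$ is a measurable function on $[0,T]$, so that the integral makes sense. Since $X$ is c\`adl\`ag and adapted it is progressively measurable, hence $(t,\omega)\mapsto X_t(\omega)$ is $\mathcal B([0,T])\otimes\mathcal F$-measurable. Invoking the robust (dual) representation of the coherent measure $\rho_0$, write $\rho_0(U)=\sup_{Q\in\mathcal Q}\mathbb E_Q[U]$ over a set $\mathcal Q$ of probability measures absolutely continuous with respect to $\mathbb P$. For each fixed $Q$, Fubini's theorem shows that $t\mapsto \mathbb E_Q[X_t]$ is measurable. The supremum over $\mathcal Q$ is a priori uncountable, and this is the main obstacle: I would resolve it using the standing assumption that $L^1(\Omega,\mathcal F,\mathbb P)$ is separable. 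Because $|\mathbb E_Q[X_t]-\mathbb E_{Q'}[X_t]|\le \|X_t\|_\infty\,\|dQ/d\mathbb P-dQ'/d\mathbb P\|_{L^1}$, the functional $Q\mapsto \mathbb E_Q[X_t]$ is $L^1$-continuous in the density, so the supremum may be taken over a countable $L^1$-dense family $\{Q_n\}\subset\mathcal Q$. Then $\rho_0(X_t)=\sup_n \mathbb E_{Q_n}[X_t]$ is a countable supremum of measurable functions and hence measurable. Finiteness of the integral, and thus $\rho(X)\in\mathbb R_+$, follows from the Lipschitz bound $|\rho_0(X_t)|\le \|X_t\|_\infty\le\|X^*\|_\infty$ integrated against the probability density $\omega$.

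With $(\ref{cumulative_risk})$ well defined, the axioms transfer directly. For monotonicity, $X_t\le Y_t$ for all $t$ (a.s.) gives $\rho_0(X_t)\le\rho_0(Y_t)$ pointwise by monotonicity of $\rho_0$, and integrating against the non-negative weight $\omega$ yields $\rho(X)\le\rho(Y)$. For translation invariance, $\rho_0(X_t+m)=\rho_0(X_t)+m$ combined with $\int_0^T\omega(t)\,dt=1$ gives $\rho(X+m)=\rho(X)+m$; this is exactly the step where the normalization of $\omega$ is essential. Positive homogeneity and subadditivity follow the same pattern: $\rho_0(\lambda X_t)=\lambda\rho_0(X_t)$ integrates to $\rho(\lambda X)=\lambda\rho(X)$ for $\lambda\ge0$, and $\rho_0(X_t+Y_t)\le\rho_0(X_t)+\rho_0(Y_t)$ integrates to $\rho(X+Y)\le\rho(X)+\rho(Y)$, using only linearity and positivity of the integral. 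Since integration against a probability weight is a positive linear operation, every coherence axiom of $\rho_0$ is inherited by $\rho$, completing the proof.
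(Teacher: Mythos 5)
Your proof is correct and follows essentially the same route as the paper: each coherence axiom of $\rho_0$ is applied pointwise in $t$ and then integrated against the non-negative, normalized weight $\omega$. Two remarks. First, you use the loss-based sign convention ($\rho_0(X+m)=\rho_0(X)+m$ and $X\le Y\Rightarrow\rho_0(X)\le\rho_0(Y)$), whereas the paper treats $X$ as a profit-and-loss variable, so that $\rho_0(X+m)=\rho_0(X)-m$ and monotonicity is antitone ($X_t\le Y_t$ a.s.\ gives $\rho_0(X_t)\ge\rho_0(Y_t)$, hence $\rho(X)\ge\rho(Y)$); your inequalities are the mirror images of the paper's and should be flipped to match its conventions, but the structure of the argument is unaffected. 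Second, your measurability argument for $t\mapsto\rho_0(X_t)$ --- progressive measurability of $X$ plus a reduction of the dual-representation supremum to a countable dense family of densities via the separability of $L^1(\Omega,\mathcal F,\mathbb P)$ --- addresses a well-definedness point that the paper's proof passes over in silence, and is a genuine (and welcome) strengthening; note only that your finiteness bound $|\rho_0(X_t)|\le\|X^*\|_{\infty}$ really requires $X\in\mathcal R^{\infty}$, a gap the paper itself shares since it states the hypothesis on $\rho_0$ only for $L^{\infty}(\Omega,\mathcal F)$ while claiming the conclusion on $\mathcal R_L^{p}$.
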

\begin{proof}
First we show the positive homogeneity and translation invariance properties of $\rho $. For $\lambda>0$ and $m \in \mathbb{R}$ we have,
\begin{equation*}
\rho(\lambda X + m) = \int_0^T \rho_0(\lambda X_t +m) \omega(t) dt = \lambda ~ \rho(X )- m \int_0^T  \omega(t) dt  \;,
\end{equation*}
which shows the positive homogeneity and translation invariance properties since $\int_0^T  \omega(t) dt = 1$.

As for monotonicity, if $X_t\leq Y_t$ a.s., then $\rho_0(X_t)\geq\rho_0(Y_t)$ for $t\in [0,T]$. Now, since $\omega$ is a positive real valued function, we have $\rho_0(X_t)\omega(t)\geq \rho_0(Y_t)\omega(t)$ for any $t\in [0,T]$ as well. This implies that $\rho(X)\geq \rho(Y)$ which proves the monotonicity property. 

Now using the convexity property of $\rho_0$ and since $\omega$ is a positive function we have, 
$$ \rho_0(X_t+Y_t)\omega(t)\leq  \rho_0(X_t)\omega(t) + \rho_0(Y_t)\omega(t)\;,$$ 
for $t\in [0,T]$. This directly implies the convexity property of $\rho$. i.e., 
\begin{equation*}
\rho(X+Y)\leq  \rho(X) + \rho(Y)\;.
\end{equation*}
\end{proof}

In this paper, we propose to use the {\it Entropic Value at Risk} measure (EVaR${}_{1-\beta}$) as our measure $\rho_0$ in (\ref{cumulative_risk}). This yields an interesting family of risk measures on the space of bounded stochastic processes. Following \cite{Ahmadi} we now give a first definition.

\begin{defn}\label{def:EVAR}
Let $X$ be a random variable in $L^{\infty}(\Omega, \mathcal F)$ having Laplace transform, i.e. 
$$\mathbb E [\exp(-s\,X)]<\infty \;,\qquad s >0\;.$$
Then the {\bf Entropic Value at Risk}, denoted by EVaR${}_{1-\beta}$,  is given by 
\begin{equation}\label{EVAR}
EVaR_{1-\beta}(X) := \mathop{\inf}_{s>0} \frac{\ln \mathbb{E}[\exp(-s\,X)] - \ln\beta}{s}.
\end{equation}
\end{defn}

The following key result for EVaR${}_{1-\beta}$ can be found in \cite{Ahmadi}.
\begin{thm}
The risk measure EVaR${}_{1-\beta}$ from Definition \ref{def:EVAR} is a {\it coherent} risk measure. Moreover, for any $X\in L^{\infty}(\Omega, \mathcal F)$ having Laplace transform, its dual representation has the form
\begin{equation}\label{Robust}
EVaR_{1-\beta}(X) = \mathop{\sup}_{f\in \mathcal{D}} \mathbb E_{\mathbb P}(-fX)\;,
\end{equation}
where $\mathcal{D} = \{ f\in L^1_+(\Omega,\mathcal F) \;|\; \mathbb E_{\mathbb P}[f \ln(f)] \leq - \ln\beta\}$
 and 
\begin{equation}\label{set}
L^1_+(\Omega,\mathcal F) = \{f \in L^1(\Omega,\mathcal F)  \;|\; \mathbb E_{\mathbb P}(f) =1\}.
\end{equation} 
 
\end{thm}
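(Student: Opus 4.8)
The plan is to establish the robust representation \eqref{Robust} first and then read off coherence directly from it, since a supremum of linear functionals of $X$ automatically enjoys the four defining properties. Monotonicity, translation invariance and positive homogeneity of the inf-formula \eqref{EVAR} could also be checked by hand (the first two are immediate, and positive homogeneity follows from the substitution $s\mapsto s/\lambda$), but subadditivity is not transparent from \eqref{EVAR}, so it is cleaner to route everything through the dual.

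The analytic heart of the argument is the Gibbs variational principle (Legendre duality between the cumulant generating function and relative entropy): for $Z\in L^\infty$,
\begin{equation*}
\ln \mathbb E_{\mathbb P}[e^{Z}] = \sup_{f\in L^1_+(\Omega,\mathcal F)}\bigl\{\mathbb E_{\mathbb P}[fZ]-\mathbb E_{\mathbb P}[f\ln f]\bigr\},
\end{equation*}
with $L^1_+$ as in \eqref{set}. Starting from the right-hand side of \eqref{Robust}, I would treat $\sup_{f\in\mathcal D}\mathbb E_{\mathbb P}(-fX)$ as a convex program: the objective is linear in $f$ and $\mathcal D$ is convex, since $f\mapsto\mathbb E_{\mathbb P}[f\ln f]$ is convex and the constraints in \eqref{set} are affine. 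Dualizing the single inequality $\mathbb E_{\mathbb P}[f\ln f]\le-\ln\beta$ with a multiplier $s\ge 0$ gives
\begin{equation*}
\sup_{f\in\mathcal D}\mathbb E_{\mathbb P}(-fX)=\inf_{s\ge 0}\;\Bigl(\sup_{f\in L^1_+}\bigl\{\mathbb E_{\mathbb P}(-fX)-s\,\mathbb E_{\mathbb P}[f\ln f]\bigr\}-s\ln\beta\Bigr),
\end{equation*}
and for $s>0$ the inner supremum is evaluated by the Gibbs principle applied to $Z=-X/s$, yielding $s\ln\mathbb E_{\mathbb P}[e^{-X/s}]-s\ln\beta$. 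The substitution $s\mapsto 1/s$ then turns $\inf_{s>0} s\bigl(\ln\mathbb E_{\mathbb P}[e^{-X/s}]-\ln\beta\bigr)$ into precisely the inf-formula \eqref{EVAR}, which proves \eqref{Robust}; observe that the Lagrange multiplier is the reciprocal of the optimization variable appearing in the definition of $\mathrm{EVaR}_{1-\beta}$.

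Once \eqref{Robust} is in hand, coherence is immediate. Each $f\in\mathcal D$ satisfies $f\ge 0$ and $\mathbb E_{\mathbb P}[f]=1$, so $X\le Y$ gives $\mathbb E_{\mathbb P}(-fX)\ge\mathbb E_{\mathbb P}(-fY)$ for every $f$ and hence monotonicity; $\mathbb E_{\mathbb P}(-f(X+m))=\mathbb E_{\mathbb P}(-fX)-m$ gives translation invariance; and since $X\mapsto\mathbb E_{\mathbb P}(-fX)$ is linear, a supremum of such maps is sublinear, delivering positive homogeneity and subadditivity at once.

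The main obstacle is justifying the exchange of $\sup_f$ and $\inf_s$, i.e.\ the absence of a duality gap in the Lagrangian step. This is where care is needed: strong duality holds because the program is convex and the density $f\equiv 1$ is strictly feasible whenever $\beta<1$, since then $\mathbb E_{\mathbb P}[1\cdot\ln 1]=0<-\ln\beta$, so Slater's condition is met. (The degenerate case $\beta=1$ forces $\mathcal D=\{1\}$ and both sides reduce to $\mathbb E_{\mathbb P}(-X)$.) I would also record that, because $X\in L^\infty$, every moment generating function is finite and the supremum in the Gibbs principle is attained at the Esscher density $f\propto e^{-X/s}$, so no integrability issues arise. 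As an alternative that sidesteps duality for subadditivity, one can note that $(u,X)\mapsto u\ln\mathbb E_{\mathbb P}[e^{-X/u}]-u\ln\beta$ is the perspective of the convex map $X\mapsto\ln\mathbb E_{\mathbb P}[e^{-X}]-\ln\beta$ and is therefore jointly convex for $u>0$; partial minimization in $u$ preserves convexity, so $\mathrm{EVaR}_{1-\beta}$ is convex, and together with positive homogeneity this yields subadditivity directly.
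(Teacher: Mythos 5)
Your argument is correct. Note that the paper itself does not prove this theorem --- it simply defers to the cited reference [Ahmadi] --- so there is no in-paper proof to compare against; your route (the Donsker--Varadhan/Gibbs variational formula $\ln\mathbb E_{\mathbb P}[e^{Z}]=\sup_{f}\{\mathbb E_{\mathbb P}[fZ]-\mathbb E_{\mathbb P}[f\ln f]\}$, Lagrangian dualization of the single entropy constraint with Slater's condition verified at $f\equiv 1$, and then reading off all four coherence axioms from the resulting supremum of the linear maps $X\mapsto\mathbb E_{\mathbb P}(-fX)$ with $f\ge 0$, $\mathbb E_{\mathbb P}[f]=1$) is essentially the standard argument given in that reference. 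The steps all check out: the inner supremum at multiplier $s>0$ does equal $s\ln\mathbb E_{\mathbb P}[e^{-X/s}]$, the substitution $s\mapsto 1/s$ recovers the inf-formula \eqref{EVAR}, strong duality for a convex program with one scalar inequality constraint and a strictly feasible point holds even over the infinite-dimensional set $L^1_+$, and your perspective-function argument gives an independent confirmation of convexity. Two small housekeeping points: the set in \eqref{set} must additionally require $f\ge 0$ (as your use of $f\ln f$ and of monotonicity implicitly assumes --- this is an omission in the paper's display, not in your proof), and when you invoke the Gibbs principle you should restrict to such nonnegative $f$ so that $\mathbb E_{\mathbb P}[f\ln f]$ is the relative entropy and the maximizing Esscher density is admissible.
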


For the proof we refer to \cite{Ahmadi}.\\

If we use the risk measure (\ref{EVAR}) in our general definition of a cumulative risk measure (\ref{cumulative_risk}), we naturally obtain a risk measure on the space  $\mathcal R_L^p$  that would inherit some of the key features of the original risk measure.

We now formally introduce the concept of {\it Cumulative Entropic Value at Risk}, denoted by CEVa$R_{1-\beta}$, on the space  $\mathcal R_L^p$.

\begin{defn}\label{def:CEVAR}
Let $X$ be a stochastic process in $\mathcal R_L^{p}$ and let EVaR${}_{1-\beta}$ be the risk measure in Definition \ref{def:EVAR}. Then, for a given weight function $\omega: [0,T] \longrightarrow \mathbb{R}^+$ (i.e. $\int_0^T \omega(t) dt = 1$), the {\bf Cumulative Entropic Value at Risk}, denoted by CEVaR${}_{1-\beta}$,  is defined by
\begin{equation}\label{CEVAR}
CEVaR_{1-\beta}(X) = \int_0^T EVaR_{1-\beta}(X_t) \omega(t) dt\;.
\end{equation}
\end{defn}

The main advantage of using (\ref{EVAR}) as our based measure is that the resulting cumulative risk measure (\ref{CEVAR}) is tractable enough for a wide family of collective risk models. This comes from the fact that the expectation appearing in (\ref{EVAR}) is merely the Laplace exponent of the random variable $X_t$ (for $t\geq 0$). In collective risk theory, many of the models used for insurance reserves have closed-form Laplace transforms, in particular the so-called L\'evy insurance risk processes. If the aggregate claims process is driven by a spectrally negative L\' evy processes then a cumulative entropic risk measure based on the EVAR${}_{1-\beta}$ is an natural choice to work with in risk management applications.

The risk measure in Definition \ref{def:CEVAR} belongs to the general framework of axiomatic risk measures on the space of stochastic processes developed in \cite{Cheridito1}. We now study some of its properties.
 
\begin{cor}
The risk measure CEVaR${}_{1-\beta}$, given in Definition \ref{def:CEVAR}, is a coherent risk measure on the space $ \mathcal{R}_L^p$.
\end{cor}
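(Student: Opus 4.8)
The plan is to observe that $CEVaR_{1-\beta}$ is nothing but the cumulative risk measure $(\ref{cumulative_risk})$ instantiated with the base measure $\rho_0 = EVaR_{1-\beta}$, and then to invoke the two results already established. First I would note that, comparing Definition \ref{def:CEVAR} with the general construction $(\ref{cumulative_risk})$, the two formulas coincide once we set $\rho_0(\cdot) = EVaR_{1-\beta}(\cdot)$; the weight function $\omega$ plays exactly the same role in both, and in particular satisfies $\int_0^T \omega(t)\,dt = 1$. Thus no new inequality or construction is required — the corollary is a pure specialization.

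Next I would verify that the hypotheses of Theorem \ref{Coherent} are met. That theorem requires the base measure $\rho_0$ to be coherent, and this is precisely the content of the cited theorem on $EVaR_{1-\beta}$: coherence of $EVaR_{1-\beta}$ has already been asserted (with proof referred to \cite{Ahmadi}). For every $t \in [0,T]$ the random variable $X_t$ has a Laplace transform by the very definition of $\mathcal{R}_L^p$, so $EVaR_{1-\beta}(X_t)$ is well defined and the base measure is coherent on the relevant domain. Applying Theorem \ref{Coherent} with this choice of $\rho_0$ then yields immediately that $CEVaR_{1-\beta}$ inherits positive homogeneity, translation invariance, monotonicity and subadditivity, i.e. it is coherent on $\mathcal{R}_L^p$.

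Since this is a corollary, there is essentially no genuine obstacle: the substantive work has already been carried out in Theorem \ref{Coherent}. The only point that deserves a line of care is the domain matching — Theorem \ref{Coherent} is phrased for a base measure on $L^\infty$, whereas $CEVaR_{1-\beta}$ acts on $\mathcal{R}_L^p$ for general $p$. I would therefore make explicit that membership in $\mathcal{R}_L^p$ guarantees, for each fixed $t$, the finiteness of $\mathbb{E}[\exp(-s\,X_t)]$ needed for $EVaR_{1-\beta}(X_t)$ to be finite, so that the integrand in $(\ref{CEVAR})$ is well defined and the hypotheses of the cumulative construction genuinely hold. With that remark in place, the result follows directly by specialization, and the proof can be a single sentence citing Theorem \ref{Coherent}.
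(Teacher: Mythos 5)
Your proposal is correct and follows exactly the paper's own argument: the paper likewise proves the corollary by recognizing CEVaR${}_{1-\beta}$ as the cumulative construction (\ref{cumulative_risk}) with the coherent base measure $\rho_0 = $ EVaR${}_{1-\beta}$ and invoking Theorem \ref{Coherent}. Your added remark on domain matching (finiteness of the Laplace transform of $X_t$ for $X \in \mathcal{R}_L^p$) is a sensible extra precaution that the paper's one-line proof omits, but it does not change the route.
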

\begin{proof}
Since EVaR${}_{1-\beta}$ is of the form (\ref{cumulative_risk}) with a coherent base risk measure $\rho_0$, it follows that EVaR${}_{1-\beta}$ is a coherent risk measure as a special case of Theorem \ref{Coherent}.
\end{proof}

Now, one can notice that in Definition \ref{def:CEVAR} the weight function $\omega$ plays an important role. Different choices of weight functions would result in different cumulative Entropic risk measures. One can naturally think of $\omega$ as a density function that distributes a probability mass over the interval $[0,T]$. Interesting choices would be to use the density function $f_{\tau}$ of a suitable stopping time $\tau \in [0,T]$, like the first passage time or ruin time. This would penalize certain regions of the interval $[0,T]$ according to whether a certain meaningful event is more or less likely to occur over these regions. 


For tractability purposes, in this paper, we use a uniform weight function, i.e. we consider $\omega(t) = \frac1T$. In the remaining of the paper we will be working with the following subfamily of CEVaR, 
 \begin{equation}\label{CEVAR_average}
 CEVaR{}_{1-\beta}(X) = \frac1T\int_0^T EVaR_{1-\beta}( X_t )  dt\; .
 \end{equation} 

Now, the object of our interest in this paper is to apply the {\it CEVaR} in (\ref{CEVAR_average}) within an insurance context where the aggregate claims are modeled by a spectrally positive L\'evy processes. We should then first verify that the class of spectrally positive L\' evy processes can be included in the space $\mathcal{R}^p$ for some $p\geq 1$. This would enable us to use {\it CEVaR} to this class of processes.

\begin{pro}
Let $(X_t)_{0\leq t\leq T}$ be a spectrally positive L\' evy process, then, $(X_t)_{0\leq t\leq T} \in \mathcal{R}^1$.
\end{pro}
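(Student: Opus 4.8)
The plan is to check the three requirements defining membership in $\mathcal R^1$: càdlàg paths, adaptedness, and the integrability condition $X^*=\sup_{0\le t\le T}|X_t|\in L^1(\Omega,\mathcal F)$. Every Lévy process admits a càdlàg modification adapted to its (augmented) natural filtration, so the first two properties are immediate and the entire content of the proposition is the bound $\mathbb E[X^*]<\infty$.

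For that, I would work through the Lévy--Itô decomposition $X_t=\gamma t+\sigma B_t+\int_0^t\!\int_{0<x\le 1}x\,\tilde N(ds,dx)+\int_0^t\!\int_{x>1}x\, N(ds,dx)=:\gamma t+\sigma B_t+J_t^{s}+J_t^{b}$, where, by spectral positivity, the jump measure $\nu$ is supported on $(0,\infty)$. The triangle inequality gives $X^*\le|\gamma|T+\sigma\sup_{t\le T}|B_t|+\sup_{t\le T}|J_t^{s}|+\sup_{t\le T}|J_t^{b}|$, and I would estimate each term in $L^1$ separately. The drift contributes the constant $|\gamma|T$; the Brownian term and the compensated small-jump term $J^{s}$ are square-integrable martingales (the latter with $\mathbb E[(J_T^{s})^2]=T\int_{0<x\le1}x^2\,\nu(dx)<\infty$), so their suprema lie in $L^2\subset L^1$ by Doob's $L^2$ maximal inequality. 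The large-jump term $J^{b}=\sum_{s\le\cdot}\Delta X_s\mathbf 1_{\{\Delta X_s>1\}}$ is, again by spectral positivity, a nondecreasing compound-Poisson process starting at $0$, so $\sup_{t\le T}|J_t^{b}|=J_T^{b}$ and $\mathbb E[J_T^{b}]=T\int_{x>1}x\,\nu(dx)$.

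I expect the large-jump term to be the only genuine obstacle: its expectation is finite precisely when $\int_{x>1}x\,\nu(dx)<\infty$, i.e. when $\mathbb E[|X_1|]<\infty$, and this is not automatic for an arbitrary spectrally positive Lévy process---an $\alpha$-stable subordinator with $\alpha<1$ is spectrally positive yet has $X^*=X_T$ with infinite mean. The finite-mean hypothesis is, however, built into the setting of the paper, since the premium rate $c=(1+\theta)\mathbb E[X(1)]$ in (\ref{Loading}) already requires $\mathbb E[X(1)]<\infty$; I would therefore read the proposition under this standing assumption. It is worth emphasizing that spectral positivity does the essential work on the downward side: because the process has no negative jumps, the downward excursions $-\inf_{t\le T}X_t$ are dominated by $|\gamma|T+\sigma\sup_{t\le T}|B_t|+\sup_{t\le T}|J_t^{s}|$ alone (the nonnegative $J^{b}$ only pushes the path upward), so they are automatically integrable, and the finite-mean condition is needed solely to tame the upward jumps. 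Summing the four bounds then yields $\mathbb E[X^*]<\infty$, and hence $X\in\mathcal R^1$.
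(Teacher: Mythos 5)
Your proof is correct, and for the crucial step it takes a genuinely different --- and in fact stronger --- route than the paper. Both arguments start from the L\'evy--It\^o decomposition and dispose of the Brownian and compensated small-jump parts by Doob's maximal inequality, so there you coincide. The divergence is in the large-jump term $\int_{x\geq 1}xN(t,dx)$: the paper introduces an auxiliary premium rate $c$ satisfying the safety-loading condition, forms the Cram\'er--Lundberg surplus $C_t=ct-\int_{x\geq1}xN(t,dx)$, and bounds $\mathbb E[-\inf_{t\le T}C_t]$ by integrating the Lundberg exponential bound $\psi(u)\le e^{-Ru}$ over quantiles; you instead observe that spectral positivity makes this term nondecreasing, so its running supremum is simply its terminal value, with mean $T\int_{x>1}x\,\nu(dx)$. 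Your one-line argument buys two things. First, it needs only the minimal hypothesis $\int_{x>1}x\,\nu(dx)<\infty$, whereas the Lundberg bound requires the adjustment coefficient $R$ to exist, i.e.\ exponential moments of the large jumps --- a light-tail condition that fails for heavy-tailed but integrable claim distributions, so the paper's proof has an implicit gap that your route closes. Second, you correctly flag that the proposition as literally stated is false without a finite-mean assumption (the $\alpha$-stable subordinator with $\alpha<1$ is a counterexample, and, somewhat awkwardly, one of the paper's own worked examples); the paper smuggles the same assumption in through the safety-loading condition without acknowledging it. Your reading of the finite-mean hypothesis as a standing assumption inherited from the premium definition is the right way to salvage the statement.
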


\begin{proof}
From the L\' evy-Ito Decomposition we see that every spectrally positive L\' evy process has the following representation (see \cite{Applebaum}),

\begin{equation}
X_t = bt + \sigma B_t+ \int _{0<x< 1} x \tilde{N}(t,dx) + \int _ {x\geq 1} x N (t,dx)\;,
\end{equation}
where $b\in \mathbb{R}$, $\sigma \in \mathbb{R}_+$, $B$ is a standard Brownian motion, $N$ is an independent Poisson random measure and $\tilde{N}$ is the compensated Poisson random measure associated to $N$.

Define $Y_t =  \sigma B_t+ \int _{0<x< 1} x \tilde{N}(t,dx) $ and $Z_t = bt + \int _ {x\geq 1} x N (t,dx)$. It is known, (see \cite{Applebaum}), that the process $(Y_t)_{0\geq t\leq T}$ is a martingale with finite moments. By using Doob's martingale inequality for the process $Y$ we have,
$$||\mathop{\sup}_{0\leq t\leq T}|Y_t|~ ||_p \leq \frac{p}{p-1} ||~|Y_T|~||_p\;, $$
which implies that the process $(Y_t)_{0\leq t\leq T}$ is in $\mathcal{R}^p$ for all $p>1$ . In order to prove the assertion, it is now sufficient to show that the process $(Z_t)_{0\leq t\leq T}$ is in $\mathcal{R}^1$. 

We know that
\begin{eqnarray*}\label{inequality}
\mathbb E[\mathop{\sup}_{0\leq t\leq T} |Z_t|] &\leq& \mathbb E \left[\mathop{\sup}_{0\leq t\leq T} (|b|t +(Z_t-bt))\right]\leq |b|T+ \mathbb E\left[\mathop{\sup}_{0\leq t\leq T} (Z_t-bt)\right]\\
&\leq& (|b|+c)T+ \mathbb E\left[ \mathop{\sup}_{0\leq t\leq T}(-ct+ \int _ {x\geq 1} x N (t,dx))\right] 
\end{eqnarray*}
\noindent for some $c>0$ satisfying the safety loading condition (\ref{Loading}) for the compound Poisson process $\int _ {x\geq 1} x N (t,dx)$. 

Now, it is enough to show that the process $ -ct +\int _ {x\geq 1} x N (t,dx)$ is in $\mathcal{R}^1$.  In order to do that, we use the fact that the process $\int _ {x\geq 1} x N (t,dx)$ is a compound Poisson process with jumps larger than $1$ and therefore the process $( C_t=ct -\int _ {x\geq 1} x N (t,dx))_{0\leq t\leq T}$ can be thought of as the net aggregate process in the classical Cramer-Lundberg model of collective insurance risk theory. We can then define the associated time of ruin $\tau_u = \inf \{t\geq 0 | u +C_t<0 \}$ as well as the associated probability of ruin,
$$ \psi(u) = \mathbb P\left( \mathop{\inf}_{t\geq 0} C_t<-u \right)\;.$$
Notice that the ruin probability is simply the tail of the distribution of the random variable $\mathop{\inf}_{t\geq 0} C_t$ and so we can write for some $\beta \in (0,1)$ ,
$$ \inf \{ u>0 \;|\; \psi(u)\leq \beta\} = VaR_{\beta}\left(\mathop{\inf}_{t\geq 0} C_t\right)\;.$$
Since $\mathop{\inf}_{t\geq 0} C_t\leq \mathop{\inf}_{0\leq t\leq T} C_t$, it can be seen by the definition of $VaR$ that $VaR_{\beta}\left(\mathop{\inf}_{0\leq t\leq T} C_t\right)\leq VaR_{\beta}\left(\mathop{\inf}_{t\geq 0} C_t\right)$.
It is well-known \cite{Bowers}, that $\psi(u)\leq e^{-Ru}$, where $R$ is the smallest positive root of the {\bf Lundberg's fundamental equation}, i.e.,  $ \lambda + cr=\lambda M_{\int _ {x\geq 1} x N (t,dx)}(r)$  where $\lambda$ is the intensity rate for the compound poisson process  $\int _ {x\geq 1} x N (t,dx)$ and $M_{\int _ {x\geq 1} x N (t,dx)}(r)$ is the moment generating function for $\int _ {x\geq 1} x N (t,dx)$.

This implies that for $T>0$,
\begin{equation}\label{inequality2}
VaR_{\beta}\left(\mathop{\inf}_{0\leq t\leq T} C_t\right) \leq  VaR_{\beta}\left(\mathop{\inf}_{t\geq 0} C_t\right)\leq \frac{- \ln\beta}{R}\;,
\end{equation}
which in turn implies that,
\begin{equation*}
\mathbb E \left[\mathop{\sup}_{0\leq t\leq T} (-C_t)\right] = - \mathbb E\left[ \mathop{\inf}_{0\leq t\leq T} C_t\right] = \int_0^1 VaR_{\beta}\left(\mathop{\inf}_{0\leq t\leq T} C_t\right) d\beta \;,
\end{equation*}
where the last inequality comes from the integral representation of the expectation in terms of its quantiles. Using (\ref{inequality2}), we can finally write,
\begin{equation*}
\mathbb E \left[\mathop{\sup}_{0\leq t\leq T} (-C_t)\right] \leq  -\frac1R \int_0^1 \ln\beta d\beta < \infty\;,
 \end{equation*}
which implies that the process $Z_t$ is in $\mathcal{R}^1$. This completes the proof.
\end{proof}


\subsection{Examples}\label{sec:examples}
The {\it Cumulative Entropic Risk Measure} introduced in Definition \ref{def:CEVAR} has the advantage of being tractable enough for a large family of processes which have Laplace transform and that can be used as models for the net-loss process in (\ref{risk}). Here we discuss a few examples and compute expressions for the CEVaR in (\ref{CEVAR_average}) for some L\'evy insurance risk models. 

\subsubsection{Brownian Motion with Drift}
Let $Y_t = \mu t + \sigma W_t$ be a Brownian motion with drift parameter $\mu$ and scale parameter $\sigma$ for $\mu\in \mathbb{R}, \sigma >0$. Such a process are used in collective risk theory as the net-loss process in (\ref{risk}) for an approximation to the classical Cramer-Lundberg model (\cite{Grandell}). The Laplace transform of $Y_t$ is 
\begin{equation*}
\mathbb{E}(e^{-sY_t}) = e^{-\mu ts + \frac12 \sigma^2 s^2 t}\;.
\end{equation*}
By direct substitution in $(\ref{EVAR})$ and differentiation with respect to $s$ we have, for $t\in[0,T]$,
\begin{equation}\label{EVAR_BM}
EVaR_{1-\beta}(Y_t) =  -\mu t+ \sigma \sqrt{-2t\ln\beta }\;. 
\end{equation}
Direct substitution and integration of (\ref{EVAR_BM}) into (\ref{CEVAR_average}) results in 
$$CEVaR_{1-\beta}(Y)= \frac{-\mu T}{2} + \frac23 \sigma \sqrt{-2T\ln\beta}\;.$$
We observe that this risk measure is an increasing linear function of $\sigma$ and a decreasing linear function of the premium rate $\mu$. This is intuitively natural since a larger premium decreases the risk exposure whereas a large volatility in the claims severities produces an increase in the risk exposure.


\subsubsection{$\alpha$-stable Subordinator}
Let $Y_t= \mu t + X_t$ be an $\alpha$-stable subordinator with drift with Laplace exponent $\phi(u) := -\frac{1}{t} \ln \mathbb{E}(e^{-uY_t}) = \mu +u^{\alpha}$ for $0<\alpha<1$. This model has been used as net-loss process in \cite{Furrer}. By applying the same straight-forward procedure as in the previous example we have, for $t\in[0,T]$,
\begin{equation*}
EVaR_{1-\beta}(Y_t) =  -\alpha t (\frac{\ln\beta}{\alpha t -1})^{\frac{\alpha -1}{\alpha}}-\mu t.
\end{equation*}
Direct integration over $[0,T]$ yields, 
\begin{equation*}
CEVaR_{1-\beta}(Y) = \alpha \ln\beta \left[ \frac{\alpha T -1}{\alpha T + T}-1\right]\left(\frac{\alpha T - 1}{\ln\beta}\right)^{\frac1{\alpha}}- \frac {\mu T}{2}.
\end{equation*}

\subsubsection{Gamma Subordinator}

Let $Y_t = \mu t +X_t $ be a gamma
process with parameters $a, b > 0$ with drift.  
This process has Laplace exponent $\phi(u):=-\frac{1}{t} \ln  \mathbb{E}(e^{-sY_t}) = -\mu t u -ta \ln( 1+ u/b)$.  In this case, EVaR$_{1-\beta}(Y_t)$ is the solution of the following equation, 
$$ -\frac{tau}{b+u} + at \ln (1+\frac ub) + \ln\beta = 0\;,\qquad u \geq 0\;. $$
The above equation is obtained by applying the Laplace exponent in the definition EVaR$_{1-\beta}$ and by straight-forward differentiation with respect to $u$. 

Unlike the previous examples, there is no close-form expression for the solution of this equation. But once EVa$R_{1-\beta}$ is obtained numerically, we can calculate CEVaR$_{1-\beta}(Y)$ by direct integration over $[0,T]$.

\section{Capital Allocation}

We now study the problem of capital allocation in an insurance context with the coherent risk measure {\it CEVaR} that we introduced in the previous section. Discussing the problem of capital allocation for {\it CEVaR}, which is a risk measure defined on $\mathcal R^p_L$, must start with an analysis of this problem for {\it EVaR}, which is a risk measure on a subspace of $L^{\infty}(\Omega,\mathcal F)$. 

Finding the capital allocation for a risk measure on the space of stochastic processes typically requires knowledge of its robust representation and its sub gradient set (see \cite{Assa2} for a detailed account on this problem). This robust representation is typically a hard problem in the space $\mathcal{R}^p_L$ that normally requires functional analysis tools. In the case of {\it EVaR} we propose to handle the issue of finding capital allocation for CEVaR by finding the capital allocation for EVaR and use the linear relation between EVaR and CEVaR to get the capital allocation for CEVaR.

We now give some definitions that will be needed throughout this section. 

\begin{defn}
Let $\rho$ be a coherent risk measure defined on $ L^{\infty}(\Omega,\mathcal F)$. Now let $\mathcal{D} \subset L^1_+$ be the largest set for which the following robust representation holds true for $\rho$,
\begin{equation}\label{Dertermining}
\rho(X) =\mathop {sup}_{f\in \mathcal{D}\subset L^1_+} \mathbb E_{\mathbb P}(-fX) ~~~~ \forall X\in L^{\infty}(\Omega,\mathcal F)\;,
\end{equation}
where $L^1_+$ is the set defined in $(\ref{set})$. The set $\mathcal{D}$ is called the determining set of $\rho$ (see \cite{Follmer2}).
\end{defn}

The following definition is taken from \cite{Cherny}.


\begin{defn}
Let $\rho$ be a coherent risk measure defined on $ L^{\infty}(\Omega,\mathcal F)$ with determining set $\mathcal{D} \subset L^1_+$.  Let $X \in L^{\infty}(\Omega,\mathcal F)$. A function $ f \in \mathcal{D} $ is called an {\it extreme function} for $X$ if $\rho(X)= \mathbb E_{\mathbb P}(fX) \in (-\infty, \infty)$. The set of extreme functions will be denoted by $\chi_{\mathcal{D}} (X)$. 
\end{defn}

The following result is taken from \cite{Cherny} and gives conditions for the set of extreme functions defined above to be non-empty.

\begin{pro}
Let $\mathcal{D} \subset L^1_+$ be the determining set of a given coherent risk measure $\rho$ on $ L^{\infty}(\Omega,\mathcal F)$. Now consider the following set, 
\begin{equation}\label{conjunto}
L^1(\mathcal{D}) := \{ X\in L^{\infty}(\Omega,\mathcal F)  \;|\; \mathop{\lim}_{n \longrightarrow \infty} \mathop{\sup}_{f\in \mathcal{D}} \mathbb E_{\mathbb P}[f\,|X|\, \mathbb I_{\{|X|> n\}}] = 0\}.
\end{equation}
If the determining set $\mathcal{D}$ is weakly compact and $X \in L^1(\mathcal{D})$, then the set of extreme functions for $X$ is not empty, i.e.$ \chi_{\mathcal{D}}(X)\neq \emptyset$.
\end{pro}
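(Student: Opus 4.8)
The plan is to identify $\chi_{\mathcal{D}}(X)$ with the set of maximizers of the linear functional $f \mapsto \mathbb{E}_{\mathbb{P}}(-fX)$ over $\mathcal{D}$ that appears in the robust representation (\ref{Dertermining}), and to prove that this supremum is attained. First I would fix $X \in L^1(\mathcal{D})$ and record that the defining condition guarantees finiteness of $\rho$: writing $\mathbb{E}_{\mathbb{P}}[f|X|] \leq m\,\mathbb{E}_{\mathbb{P}}[f] + \mathbb{E}_{\mathbb{P}}[f|X|\mathbb{I}_{\{|X|>m\}}]$ and using $\mathbb{E}_{\mathbb{P}}(f)=1$ from (\ref{set}) together with the hypothesis $\sup_{f\in\mathcal{D}}\mathbb{E}_{\mathbb{P}}[f|X|\mathbb{I}_{\{|X|>m\}}] \to 0$, one gets $\sup_{f\in\mathcal{D}}\mathbb{E}_{\mathbb{P}}[f|X|] < \infty$, so $\rho(X)$ is finite. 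Then I would pick a maximizing sequence $(f_n)_{n\geq 1} \subset \mathcal{D}$ with $\mathbb{E}_{\mathbb{P}}(-f_n X) \to \rho(X)$. Since $\mathcal{D}$ is weakly compact in $L^1$, the Eberlein--\v{S}mulian theorem yields a subsequence $(f_{n_k})$ converging weakly to some $f^\ast \in L^1$; weak compactness also forces $\mathcal{D}$ to be weakly closed, so $f^\ast \in \mathcal{D}$.

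The heart of the argument is to justify $\mathbb{E}_{\mathbb{P}}(-f_{n_k}X) \to \mathbb{E}_{\mathbb{P}}(-f^\ast X)$. Because weak convergence in $L^1$ only tests against bounded integrands, I would split $X$ through the truncation $X = X\mathbb{I}_{\{|X|\leq m\}} + X\mathbb{I}_{\{|X|>m\}}$. On the bounded part $X\mathbb{I}_{\{|X|\leq m\}} \in L^\infty$, so for each fixed $m$ weak convergence gives $\mathbb{E}_{\mathbb{P}}[f_{n_k}X\mathbb{I}_{\{|X|\leq m\}}] \to \mathbb{E}_{\mathbb{P}}[f^\ast X\mathbb{I}_{\{|X|\leq m\}}]$. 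The tail part is controlled uniformly, precisely by the definition of $L^1(\mathcal{D})$: since $\sup_{f\in\mathcal{D}}\mathbb{E}_{\mathbb{P}}[f|X|\mathbb{I}_{\{|X|>m\}}] \to 0$ as $m\to\infty$ and every $f_{n_k}$ as well as $f^\ast$ lies in $\mathcal{D}$, a single bound dominates all the tails at once. A standard $\varepsilon/3$ estimate then closes the gap: choose $m$ so that both tails are below $\varepsilon/3$ uniformly in $k$, and then $k$ large enough to control the bounded part.

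Passing to the limit yields $\mathbb{E}_{\mathbb{P}}(-f^\ast X) = \rho(X)$, so $f^\ast$ is an extreme function for $X$ and $\chi_{\mathcal{D}}(X) \neq \emptyset$. I expect the main obstacle to be exactly the limit exchange in the second paragraph: weak $L^1$-convergence alone is too weak to integrate against an unbounded $X$, and the uniform integrability encoded in $L^1(\mathcal{D})$ is what repairs this, which is why the hypothesis $X \in L^1(\mathcal{D})$ cannot be dropped. A secondary point to check is the reconciliation of the sign convention between the robust representation (\ref{Dertermining}), written with $-fX$, and the defining identity $\rho(X)=\mathbb{E}_{\mathbb{P}}(fX)$ for an extreme function; this is only bookkeeping and does not affect the compactness argument.
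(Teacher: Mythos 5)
Your proof is correct, but note that the paper offers no proof of this proposition at all: it is quoted as a result imported from \cite{Cherny}, so the only comparison available is with Cherny's original argument, which yours essentially reproduces --- maximizing sequence, weak sequential compactness of $\mathcal{D}$ via Eberlein--\v{S}mulian, and the $\varepsilon/3$ truncation in which the uniform tail control built into $L^1(\mathcal{D})$ upgrades weak $L^1$-convergence (which a priori only pairs against bounded integrands) to convergence of $\mathbb{E}_{\mathbb{P}}(-f_{n_k}X)$. Two small observations. As the proposition is literally stated here, the hypothesis $X\in L^1(\mathcal{D})$ is vacuous: the paper defines $L^1(\mathcal{D})$ as a subset of $L^{\infty}(\Omega,\mathcal F)$, and for bounded $X$ one has $\mathbb{I}_{\{|X|>n\}}=0$ for $n$ beyond $\|X\|_{\infty}$, so every $X\in L^{\infty}$ belongs to $L^1(\mathcal{D})$; in that case your truncation step is unnecessary, since $X$ itself is an admissible test function for weak convergence and attainment follows directly from the weak continuity of $f\mapsto\mathbb{E}_{\mathbb{P}}(-fX)$ on the weakly compact (hence weakly closed) set $\mathcal{D}$. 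Your longer argument is the one actually needed in Cherny's original setting, where $X$ is not assumed bounded, and it is the more robust proof to have on record. Finally, you are right that the discrepancy between $\rho(X)=\mathbb{E}_{\mathbb{P}}(fX)$ in the paper's definition of an extreme function and $\mathbb{E}_{\mathbb{P}}(-fX)$ in the robust representation is only a sign typo; with the convention $\rho(X)=\sup_{f\in\mathcal{D}}\mathbb{E}_{\mathbb{P}}(-fX)$ the extreme functions are exactly the maximizers, which is how you use them.
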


Now, we turn our attention to the concept of {\bf capital allocation}. Consider a vector of risks $X = ( X^1, \dots ,X^d)$, such that $X^i \in L^{\infty}(\Omega, \mathcal F)$ for $i=1,\dots,d$, are random variables representing the cash flow or risk exposure 
of a portfolio consisting of $d$ risky positions or departments. In this paper, these will be net-loss positions of an insurance policy contract at a given time. 

Given a coherent risk measure $\rho$ on $L^{\infty}(\Omega,\mathcal F)$, we now look at the problem of how to allocate the total risk of the portfolio $\rho \left( X^1 + \dots + X^d \right)$ among the different departments such that the individual risk of each one of them is properly measured. 

The following formal definition of capital allocation was proposed by \cite{Delbaen3} and \cite{Fischer} and it is the one we set out to study in this paper. In fact, the following gives a mathematical definition of capital allocation for a coherent risk measure. 

\begin{defn}\label{Capital}
Consider a coherent risk measure $\rho$ on $L^{\infty}(\Omega, \mathcal F)$ and a vector of risks $X=(X^1, \dots ,X^d)$ such that $X^i \in L^{\infty}(\Omega, \mathcal F)$ for $i=1,\dots,d$. A fair capital allocation for $X$ is a vector $(K^1, . . . , K^d ) \in \mathbb{R}^d$ such that
\begin{enumerate}
\item \begin{equation*}
\sum_{i=1}^d K^i = \rho \left(\sum_{i=1}^d X^i \right)\;,  \\
\end{equation*}
\item  
\begin{equation*}
\sum_{i=1}^d  h^i K^i \leq  \rho\left(\sum_{i=1}^d h^i  X^i \right)\;,\qquad \forall \, h = (h^1,\dots, h^d) \in \mathbb{R}^d_+\;.
\end{equation*}
\end{enumerate}
\end{defn}
The first condition is called the {\it full allocation} property and it simply states the fact that the total risk of the whole portfolio should be the aggregated risks of each department. The second condition is called the {\it linear diversification property} of capital allocation. In fact, this condition has a one to one correspondence with the positive homogeneity and subadditivity properties of a coherent risk measure $\rho$ (see \cite{Kalkbrener}). Since we work in this paper with a coherent risk measure it is somehow natural to adopt this definition of capital allocation.

The following is an interesting result characterizing the set of possible such capital allocations and it is adapted from \cite{Cherny}. 
 
\begin{thm}\label{capital1}
Let $\mathcal D \subset L^1_+$ be the determining set of a given coherent risk measure $\rho$ on $L^{\infty}(\Omega,\mathcal F)$ and let $X=(X^1, \dots ,X^d)$ be a vector such that $X^i \in L^{\infty}(\Omega, \mathcal F)$ for $i=1,\dots,d$. Consider the following set 
\begin{equation}\label{setG}
G = \overline{\{\left( \mathbb E_{\mathbb P}(-f\,X^1), \dots \, , \mathbb E_{\mathbb P}(-f\,X^d) \right) \;|\; f \in \mathcal{D}\}} \subset \mathbb{R}^d \;.
\end{equation} 
The set $U\subset \mathbb R^d$ of capital allocations for $X=(X^1, \dots ,X^d)$, satisfying Definition \ref{Capital}, is convex and bounded and it has the form
\begin{equation}\label{argmax}
U = \mathop{argmax}_{x \in G} <e,x>\;,
\end{equation}
where $<\cdot,\cdot> $ is the inner product in $\mathbb{R}^d$, $e = (1, \dots, 1)$ and argmax is the set of points of $G$ for which $<e,x>$ attains its maximum value. 

If moreover, $X^1, \dots ,X^d \in L^1(\mathcal{D})$ and $ \mathcal{D}$ is weakly compact, then $U$ can be identified to be
\begin{equation}
U = \left\{ \left( \mathbb E_{\mathbb P}(-f\,X^1), \dots\, , \mathbb E_{\mathbb P}(-f\,X^d) \right) \;  |\; f \in \chi_{\mathcal{D}} \left( \sum_{i=1}^d X^i \right) \right \}. 
\end{equation}
\end{thm}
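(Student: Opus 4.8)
The plan is to reduce both conditions of Definition~\ref{Capital} to statements about the support function of $G$ and then read off the conclusion from elementary convex geometry in $\mathbb R^d$. Write $\Phi(f) := \left(\mathbb E_{\mathbb P}(-fX^1),\dots,\mathbb E_{\mathbb P}(-fX^d)\right)$, so $G=\overline{\Phi(\mathcal D)}$, and let $\sigma_G(h):=\sup_{x\in G}\langle h,x\rangle$ be its support function. Since each $X^i\in L^\infty$ and every $f\in\mathcal D$ is a density, $|\mathbb E_{\mathbb P}(-fX^i)|\le\|X^i\|_\infty$, so $\Phi(\mathcal D)$ is bounded; being closed by construction, $G$ is compact, and as $\mathcal D$ is convex and $\Phi$ affine, $G$ is convex. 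The crucial observation is that the robust representation~\eqref{Dertermining} with linearity of $f\mapsto\mathbb E_{\mathbb P}(-f\,\cdot\,)$ gives, for every $h\in\mathbb R^d$,
\[
\rho\Big(\sum_{i=1}^d h^i X^i\Big)=\sup_{f\in\mathcal D}\sum_{i=1}^d h^i\,\mathbb E_{\mathbb P}(-fX^i)=\sigma_G(h).
\]
Hence $(K^1,\dots,K^d)$ is a capital allocation exactly when (A) $\langle e,K\rangle=\sigma_G(e)$ and (B) $\langle h,K\rangle\le\sigma_G(h)$ for all $h\in\mathbb R^d_+$.

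For the first identity I would argue the two inclusions separately. If $x^\ast\in\operatorname{argmax}_{x\in G}\langle e,x\rangle$, then $\langle e,x^\ast\rangle=\sigma_G(e)$ is (A), and $x^\ast\in G$ gives $\langle h,x^\ast\rangle\le\sigma_G(h)$ for every $h$, in particular (B); so $x^\ast\in U$. Conversely, let $K\in U$. The key step is to upgrade (B) from the positive orthant to all of $\mathbb R^d$: for arbitrary $g\in\mathbb R^d$ pick $\lambda>0$ with $g+\lambda e\in\mathbb R^d_+$, and apply (B) to $h=g+\lambda e$, using (A) on the left and subadditivity of the support function on the right,
\[
\langle g,K\rangle+\lambda\,\sigma_G(e)=\langle g+\lambda e,K\rangle\le\sigma_G(g+\lambda e)\le\sigma_G(g)+\lambda\,\sigma_G(e),
\]
so $\langle g,K\rangle\le\sigma_G(g)$ for all $g$. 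As $G$ is closed and convex, the separating hyperplane theorem then forces $K\in G$, and with (A) this means $K\in\operatorname{argmax}_{x\in G}\langle e,x\rangle$. Being the set on which the linear functional $\langle e,\cdot\rangle$ attains its maximum over the compact convex $G$, $U$ is a face of $G$, hence convex and bounded.

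For the second identity I would use weak compactness of $\mathcal D$ to show the maximizing face of $G$ is actually attained inside $\Phi(\mathcal D)$. One inclusion is immediate: if $f\in\chi_{\mathcal D}(\sum_i X^i)$ then $\Phi(f)\in G$ and $\langle e,\Phi(f)\rangle=\mathbb E_{\mathbb P}(-f\sum_i X^i)=\rho(\sum_i X^i)=\sigma_G(e)$, so $\Phi(f)\in U$, and the preceding Proposition ensures this set is nonempty. For the reverse, take $x^\ast\in U$; lying in $G=\overline{\Phi(\mathcal D)}$ it is a limit $x^\ast=\lim_k\Phi(f_k)$ with $f_k\in\mathcal D$, and weak compactness extracts a subsequence $f_k\rightharpoonup f\in\mathcal D$. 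Since each $X^i\in L^\infty=(L^1)^\ast$ the pairings converge, $\mathbb E_{\mathbb P}(f_kX^i)\to\mathbb E_{\mathbb P}(fX^i)$, whence $\Phi(f_k)\to\Phi(f)$ and so $\Phi(f)=x^\ast$. Finally $\langle e,x^\ast\rangle=\sigma_G(e)$ reads $\mathbb E_{\mathbb P}(-f\sum_i X^i)=\rho(\sum_i X^i)$, i.e. $f\in\chi_{\mathcal D}(\sum_i X^i)$. (For $X^i\in L^\infty$ the hypothesis $X^i\in L^1(\mathcal D)$ is automatic, since $|X^i|\le\|X^i\|_\infty$ makes the truncated suprema in~\eqref{conjunto} vanish; its role is merely to invoke the previous Proposition.)

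The main obstacle is exactly this last reverse inclusion, namely passing from the closure $G$ to genuine elements of $\mathcal D$: a point of the maximizing face is a priori only a limit $\lim_k\Phi(f_k)$, and without compactness the maximizing density need not exist. The weak compactness of $\mathcal D$ together with the continuity of $f\mapsto\mathbb E_{\mathbb P}(fX^i)$ under weak $L^1$-convergence (available precisely because $X^i\in L^\infty$) is what upgrades such a limit into an attained value $\Phi(f)$ with $f$ extreme; everything else is elementary convex duality in $\mathbb R^d$.
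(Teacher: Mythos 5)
Your proof is correct, and it is a genuinely different argument from the one in the paper: the paper does not prove the theorem directly at all, but simply invokes Cherny's result \cite{Cherny} for coherent \emph{utility} functions and transfers it via the sign flip $\rho^*(X):=-\rho(-X)$. Your route is a self-contained convex-duality argument in $\mathbb{R}^d$: you identify $\rho(\sum_i h^iX^i)$ with the support function $\sigma_G(h)$ of the generator, reduce the two axioms of Definition~\ref{Capital} to $\langle e,K\rangle=\sigma_G(e)$ and $\langle h,K\rangle\le\sigma_G(h)$ on $\mathbb{R}^d_+$, and the extension of the latter to all of $\mathbb{R}^d$ via $h=g+\lambda e$ plus subadditivity is exactly the right trick to force $K\in G$; the second identity then follows from weak sequential compactness of $\mathcal{D}$ (Eberlein--\v{S}mulian) and weak continuity of $f\mapsto\mathbb{E}_{\mathbb P}(fX^i)$ for $X^i\in L^\infty$. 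What your approach buys is transparency about which hypotheses do what: convexity and closedness of $G$ are needed for the biduality step $\bigl(\langle g,K\rangle\le\sigma_G(g)\ \forall g\bigr)\Rightarrow K\in G$, and weak compactness is needed only to upgrade a limit point of $\Phi(\mathcal D)$ to an attained value $\Phi(f)$; the paper's citation hides all of this. Two small points you should make explicit: (i) the convexity of $\mathcal{D}$, which you use to get $G$ convex, is not stated in the theorem but does follow from maximality of the determining set in \eqref{Dertermining}, since replacing $\mathcal D$ by its closed convex hull does not change the supremum; (ii) you silently use $\rho(X)=\mathbb{E}_{\mathbb P}(-fX)$ as the defining property of an extreme function, whereas the paper's definition reads $\mathbb{E}_{\mathbb P}(fX)$ --- that is evidently a sign typo in the paper (inherited from Cherny's utility-function convention), and your reading is the one consistent with the robust representation. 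Your observation that $X^i\in L^\infty$ makes the hypothesis $X^i\in L^1(\mathcal D)$ from \eqref{conjunto} automatic is also correct.
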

\begin{proof}
In \cite{Cherny}, the author provides a proof of the theorem for coherent utility functions. The result follows by noticing that, for a given coherent risk measure $\rho$, if we set $\rho^*(X) := -\rho(-X)$ we obtain a coherent utility function and the result in \cite{Cherny} holds. So, from $\rho(X) = -\rho^*(-X)$  the results for the statement of our theorem holds.
\end{proof}

The set $G\subset \mathbb R^d$ in Theorem \ref{capital1} is called the {\it generator} for $X$ and $\rho$ (see \cite{Cherny}). The following corollary gives a condition on $G$ for the uniqueness of the capital allocation.

\begin{cor}\label{uniqueness}
Under the conditions of Theorem \ref{capital1}. If moreover, $G \subset \mathbb R^d$ is strictly convex (i.e. its interior is non-empty and its border contains no interval), then there is a unique capital allocation satisfying Definition \ref{Capital}.
\end{cor}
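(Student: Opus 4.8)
The plan is to exploit the characterization of the allocation set furnished by Theorem \ref{capital1}, namely
\[
U = \mathop{argmax}_{x\in G}\ \langle e,x\rangle ,
\]
and to reduce the whole statement to the elementary geometric fact that a non-constant linear functional attains its maximum at exactly one point of a strictly convex compact set. Since all the analytic content (the dual representation of $\rho$ and the set of extreme functions) has already been absorbed into Theorem \ref{capital1}, what remains is purely a statement about $G$ and the fixed direction $e=(1,\dots,1)$.

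First I would record that $G$ is compact. It is closed by definition, being a closure, and it is bounded because each coordinate satisfies $|\mathbb E_{\mathbb P}(-f X^i)|\le \|X^i\|_\infty\,\mathbb E_{\mathbb P}(f)=\|X^i\|_\infty<\infty$ for every $f\in\mathcal D\subset L^1_+$. Continuity of $x\mapsto\langle e,x\rangle$ on the compact set $G$ then guarantees that $M:=\max_{x\in G}\langle e,x\rangle$ is attained, so $U\neq\emptyset$.

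Next I would argue by contradiction. Suppose $U$ contained two distinct points $x_1\neq x_2$, so that $\langle e,x_1\rangle=\langle e,x_2\rangle=M$. For $\lambda\in[0,1]$ the convexity of $G$ gives $x_\lambda:=\lambda x_1+(1-\lambda)x_2\in G$, and by linearity $\langle e,x_\lambda\rangle=M$; hence the entire segment $[x_1,x_2]$ lies in $U\subset G$. I would then observe that every maximizer must lie on the boundary $\partial G$: if some $x_0\in U$ were interior, then $x_0+\varepsilon e\in G$ for small $\varepsilon>0$, yet $\langle e,x_0+\varepsilon e\rangle=M+\varepsilon\, d>M$ (since $\|e\|^2=d>0$), contradicting maximality. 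Here the non-emptiness of $\operatorname{int}G$, which is part of the strict-convexity hypothesis, is what makes this step meaningful. Consequently the nondegenerate segment $[x_1,x_2]$ would be contained in $\partial G$, contradicting the assumption that the border of the strictly convex set $G$ contains no interval. Therefore $U$ is a singleton, which is exactly the assertion that the capital allocation is unique.

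The main obstacle is the boundary step: one must exclude the possibility that the maximizing face of $G$ is a nondegenerate segment, and this is precisely where strict convexity is used. Ruling out interior maximizers needs only $e\neq 0$ together with $\operatorname{int}G\neq\emptyset$, while the collapse of a multi-point argmax onto a boundary interval is the standard supporting-hyperplane picture; the only real care required is to carry these arguments out for the closure $G$ rather than for the raw image $\{(\mathbb E_{\mathbb P}(-fX^1),\dots,\mathbb E_{\mathbb P}(-fX^d)):f\in\mathcal D\}$, which is handled automatically once $G$ is known to be compact and convex.
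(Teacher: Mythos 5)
Your argument is correct, and it is worth noting that it supplies something the paper itself does not: the paper's ``proof'' of Corollary \ref{uniqueness} consists entirely of a pointer to \cite{Cherny} (translated from coherent utility functions to coherent risk measures), whereas you give a self-contained elementary derivation from the characterization $U=\mathop{argmax}_{x\in G}\langle e,x\rangle$ of Theorem \ref{capital1}. Your chain of reasoning is sound at every step: $G$ is compact (closed as a closure, bounded since $|\mathbb E_{\mathbb P}(-fX^i)|\le \|X^i\|_{\infty}\,\mathbb E_{\mathbb P}(f)=\|X^i\|_{\infty}$), so the argmax is non-empty; two distinct maximizers would force a nondegenerate segment into $U$ by convexity and linearity; no maximizer can be interior because perturbing in the direction $e$ strictly increases $\langle e,\cdot\rangle$; hence the segment would sit inside $\partial G$, contradicting the hypothesis that the border contains no interval. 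One minor remark: the step ``maximizers lie on $\partial G$'' does not actually need $\operatorname{int}G\neq\emptyset$ --- if the interior were empty the claim would hold vacuously --- so the non-emptiness clause of the strict-convexity hypothesis is not load-bearing in your argument; the only essential ingredient is that $\partial G$ contains no segment. Since the underlying geometric fact (a linear functional in a nonzero direction has a unique maximizer on a strictly convex compact set) is presumably also what drives the cited result in \cite{Cherny}, your route is the same in spirit but has the advantage of being verifiable without consulting the reference.
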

\begin{proof}
See \cite{Cherny} for a proof in terms of coherent utility functions. 
\end{proof}

Corollary \ref{uniqueness} gives us sufficient conditions for this capital allocation to be unique. The following result characterizes such unique capital allocation by giving a representation for each one of its components.   

\begin{thm}\label{Riskcont}
Let $\mathcal D \subset L^1_+$ be the determining set of a given coherent risk measure $\rho$ on $L^{\infty}(\Omega,\mathcal F)$ and $L^1(\mathcal{D})$ be the associated set defined in (\ref{conjunto}). Moreover, let $X=(X^1, \dots ,X^d)$ be a vector such that $X^i \in L^{\infty}(\Omega, \mathcal F)$ for $i=1,\dots,d$. If $ \mathcal{D}$ is weakly compact, $X^1, \dots ,X^d \in L^1(\mathcal{D})$ and  $\chi_{\mathcal{D}}(\sum_{i=1}^d X^i ))$ is a singleton then, for $1\leq i \leq d$,
\begin{equation}
\mathbb E_{\mathbb P}(-fX^i) = \mathop {\lim}_{\epsilon \downarrow 0} \frac{\rho(\sum_{j=1}^d X^j  + \epsilon X^i) - \rho(\sum_{j=1}^d X^j)}{\epsilon} \;.
\end{equation}
\end{thm}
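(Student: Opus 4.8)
The plan is to recognize the right-hand side as the right directional (Gateaux) derivative of the map $\epsilon \mapsto \rho(S + \epsilon X^i)$ at $\epsilon=0$, where $S := \sum_{j=1}^d X^j$, and to evaluate it by a Danskin-type envelope argument that exploits the singleton hypothesis on $\chi_{\mathcal{D}}(S)$. Throughout I write $f$ for the unique element of $\chi_{\mathcal{D}}(S)$, so $\rho(S) = \mathbb{E}_{\mathbb{P}}(-fS)$, and I use the robust representation $\rho(Y) = \sup_{g \in \mathcal{D}} \mathbb{E}_{\mathbb{P}}(-gY)$ furnished by the determining set. First I would record that the difference quotient is bounded and that $\rho(S + \epsilon X^i) \to \rho(S)$ as $\epsilon \downarrow 0$: subadditivity and positive homogeneity give $-\epsilon\,\rho(-X^i) \le \rho(S + \epsilon X^i) - \rho(S) \le \epsilon\,\rho(X^i)$, which settles continuity and boundedness at once.

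For the lower bound I would simply test the supremum defining $\rho(S + \epsilon X^i)$ against the fixed feasible point $f \in \mathcal{D}$:
$$\rho(S + \epsilon X^i) \ge \mathbb{E}_{\mathbb{P}}\!\left(-f(S + \epsilon X^i)\right) = \rho(S) + \epsilon\,\mathbb{E}_{\mathbb{P}}(-fX^i),$$
so the difference quotient is $\ge \mathbb{E}_{\mathbb{P}}(-fX^i)$ for every $\epsilon > 0$, giving $\liminf_{\epsilon \downarrow 0} \ge \mathbb{E}_{\mathbb{P}}(-fX^i)$.

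The upper bound is the crux. Since each $X^j$ is bounded, the defining limit in (\ref{conjunto}) vanishes, so $X^j \in L^1(\mathcal{D})$, and likewise $S + \epsilon X^i \in L^1(\mathcal{D})$; hence, by the preceding Proposition guaranteeing non-emptiness of the extreme-function set, I may pick a maximizer $f_\epsilon \in \chi_{\mathcal{D}}(S + \epsilon X^i)$. Because $f_\epsilon \in \mathcal{D}$ we have $\mathbb{E}_{\mathbb{P}}(-f_\epsilon S) \le \rho(S)$, whence
$$\frac{\rho(S + \epsilon X^i) - \rho(S)}{\epsilon} = \frac{\mathbb{E}_{\mathbb{P}}(-f_\epsilon S) - \rho(S)}{\epsilon} + \mathbb{E}_{\mathbb{P}}(-f_\epsilon X^i) \le \mathbb{E}_{\mathbb{P}}(-f_\epsilon X^i).$$
It therefore suffices to show that $f_\epsilon \rightharpoonup f$ weakly in $L^1$ as $\epsilon \downarrow 0$, since $X^i \in L^\infty = (L^1)^*$ then forces $\mathbb{E}_{\mathbb{P}}(-f_\epsilon X^i) \to \mathbb{E}_{\mathbb{P}}(-fX^i)$.

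To obtain this weak convergence I would use that, by the Eberlein–Šmulian theorem, the weakly compact set $\mathcal{D}$ is weakly sequentially compact. Along any sequence $\epsilon_n \downarrow 0$, extract a subsequence with $f_{\epsilon_n} \rightharpoonup f_* \in \mathcal{D}$; passing to the limit in $\rho(S + \epsilon_n X^i) = \mathbb{E}_{\mathbb{P}}(-f_{\epsilon_n}S) - \epsilon_n\,\mathbb{E}_{\mathbb{P}}(f_{\epsilon_n}X^i)$ and using $S \in L^\infty$, the continuity $\rho(S+\epsilon_n X^i)\to\rho(S)$, and boundedness of $\mathbb{E}_{\mathbb{P}}(f_{\epsilon_n}X^i)$, I obtain $\rho(S)=\mathbb{E}_{\mathbb{P}}(-f_*S)$, i.e. $f_*\in\chi_{\mathcal{D}}(S)=\{f\}$, forcing $f_*=f$. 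Thus every subsequential weak limit of the maximizers equals $f$; extracting a subsequence realizing $\limsup_{\epsilon\downarrow 0}\mathbb{E}_{\mathbb{P}}(-f_\epsilon X^i)$ and then a weakly convergent sub-subsequence yields $\limsup_{\epsilon\downarrow 0}\mathbb{E}_{\mathbb{P}}(-f_\epsilon X^i)=\mathbb{E}_{\mathbb{P}}(-fX^i)$, hence $\limsup$ of the difference quotient $\le \mathbb{E}_{\mathbb{P}}(-fX^i)$. Combined with the lower bound, the limit exists and equals $\mathbb{E}_{\mathbb{P}}(-fX^i)$. The main obstacle is precisely this identification of the weak limit of the maximizers $f_\epsilon$ with the unique extreme function $f$: the singleton hypothesis is exactly what makes the subsequence argument collapse to a genuine limit, while weak compactness is what licenses extracting the limits and testing them against the bounded variables $S$ and $X^i$.
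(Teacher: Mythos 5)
Your proof is correct. It differs from the paper's in that the paper does not actually carry out the analysis: it identifies $\mathbb{E}_{\mathbb P}(-fX^i)$ with the risk contribution $\tilde\rho(X^i\mid\sum_j X^j)$ and then cites \cite{Cherny} for the fact that the risk contribution equals the one-sided derivative, adding only the observation that the sign flip $\rho^*(X):=-\rho(-X)$ transports Cherny's statement from coherent utility functions to coherent risk measures. You instead supply the underlying Danskin-type envelope argument in full: the lower bound by testing the supremum at the fixed extreme function $f$, the upper bound via maximizers $f_\epsilon\in\chi_{\mathcal D}(S+\epsilon X^i)$ (whose existence is licensed by weak compactness of $\mathcal D$ and the fact that $L^\infty\subset L^1(\mathcal D)$), and the identification of every weak subsequential limit of the $f_\epsilon$ with the unique element of $\chi_{\mathcal D}(S)$ via Eberlein--\v{S}mulian, weak closedness of $\mathcal D$, and the Lipschitz continuity $\lvert\rho(S+\epsilon X^i)-\rho(S)\rvert\le\epsilon\max\{\rho(X^i),\rho(-X^i)\}$. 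All the steps check out, including the boundedness $\lvert\mathbb{E}_{\mathbb P}(f_\epsilon X^i)\rvert\le\lVert X^i\rVert_\infty$ coming from $\mathbb{E}_{\mathbb P}(f_\epsilon)=1$, and the singleton hypothesis is used exactly where it must be. What your route buys is a self-contained proof that makes visible precisely which hypotheses (weak compactness, the singleton condition) drive the conclusion; what the paper's route buys is brevity at the cost of deferring the substance to the reference.
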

\begin{proof}
Since $\chi_{\mathcal{D}}(\sum_{i=1}^d X^i )$ is a singleton and $ \mathcal{D}$ is weakly compact then, from Theorem \ref{capital1}, there exists an unique function $f\in \chi_{\mathcal{D}}(\sum_{i=1}^d X^i )$ and as a consequence the set of capital allocations is $U = \{ (\mathbb E_{\mathbb P}(-f X^1), \dots, \mathbb E_{\mathbb P}(-f X^d))\}$. This means that each component of the capital allocation $\mathbb E_{\mathbb P}(-f X^i)$ is simply the {\it risk contribution} of $X^i$ to $\sum_{i=1}^d X^i $ (see \cite{Cherny}). If we use the standard notation for the $i^{th}$ risk contribution $\tilde{\rho}(X^i| \sum_{i=1}^d X^i )$ this means, 
\begin{equation*}
\tilde{\rho}(X^i| \sum_{i=1}^d X^i ) = \mathop{\sup} _{f\in \chi_{\mathcal{D}}(\sum_{i=1}^d X^i )} \mathbb{E}_{\mathbb P}(-f X^i)\;.
\end{equation*}

It is then sufficient to show that, 
\begin{equation}\label{risk_contribution}
\tilde{\rho}(X^i| \sum_{i=1}^d X^i ) = \mathop {\lim}_{\epsilon \downarrow 0} \frac{\rho(\sum_{j=1}^d X^j  + \epsilon X^i) - \rho(\sum_{j=1}^d X^j)}{\epsilon}\;,~~~~ for~ 1\leq i \leq d\;.
\end{equation}
The result in (\ref{risk_contribution}) was shown in \cite{Cherny} for coherent utility functions. Since $\rho^*(X) := -\rho(-X)$ is a coherent utility function for any coherent risk measure $\rho$, it follows directly that (\ref{risk_contribution}) also holds for coherent risk measures. 
\end{proof}

\subsection{CEVaR and the Capital Allocation Problem}

Our main goal in this paper is to apply cumulative entropic risk measure in a capital allocation problem. In the previous section, we discussed key notions of the capital allocation problem for a risk measure on $L^{\infty}(\Omega,\mathcal F)$. In this section, we apply these results in order to give an answer to the problem of capital allocation for {\it CEVaR} which is a risk measure on $\mathcal R_L^p$. Notice that this is a somewhat more complicated problem since there is a dynamic component to this problem. Here, this is overcome by the cumulative property of CEVa$R_{{}_{1-\beta}}$. We start by extending Definition \ref{Capital} to the more general notion of capital allocation with respect to a coherent risk measure on the space $\mathcal R_L^p$. The following definition is taken from \cite{Billera}.

\begin{defn}\label{capital for process}
Let $\left(X^1_t,\dots, X^d_t\right)_{t\in[0,T]}$ be $d$ random processes in $\mathcal R_L^p$ representing $d$ financial positions or departments. Moreover, consider a coherent risk measure $\rho: \mathcal{R}^p_L \longrightarrow \mathbb R_+$ defined on the space $\mathcal R^p_L$. A fair capital allocation for $\left(X^1_t,\dots, X^d_t\right)_{t\in[0,T]}$ with respect to $\rho$ is a vector $(L^1, . . . , L^d ) \in \mathbb{R}^d$ such that,
\begin{enumerate}
\item \begin{equation*}
\sum_{i=1}^d L^i = \rho \left(\sum_{i=1}^d X^i \right)\;,  \\
\end{equation*}
\item  
\begin{equation*}
\sum_{i=1}^d  h^i L^i \leq  \rho\left(\sum_{i=1}^d h^i  X^i \right)\;,\qquad \forall \, h = (h^1,\dots, h^d) \in \mathbb{R}^d_+\;,
\end{equation*}
\end{enumerate}
\end{defn}
 where $\sum_{i=1}^d X^i$ denotes the process $\left( \sum_{i=1}^d X_t^i\right)_{t\in[0,T]}$.
 
In this section, we show how a capital allocation satisfying Definition \ref{capital for process} can be obtained when using CEVaR as risk measure. 

We first need to show that the border of the set $\mathcal{D}$ in the robust representation (\ref{Dertermining}) for EVa$R_{1-\beta}$ is not a convex set. This leads to the fact that the set $G$ in $(\ref{setG})$ is not a convex set too. In fact, this immediately implies that the {\it Euler} allocation (see \cite{Tasche3})  is the only possible allocation method for {\it EVaR} as well as for {\it CEVaR}.  

\begin{thm}\label{convex}
Let $D$ be the determining set in the robust representation $(\ref{Robust})$ for EVa$R_{1-\beta}$. Then the set $\partial \mathcal{D} =\{ f\in L^{\infty}_+: \mathbb E_{\mathbb P}(f \ln(f)) = - \ln\beta\}$ is not a convex set.
\end{thm}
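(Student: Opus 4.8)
The plan is to exploit the \emph{strict} convexity of the relative-entropy functional $H(f):=\mathbb E_{\mathbb P}[f\ln f]$ on the set of densities $L^1_+$. Note first that $\mathcal D$ is exactly the sublevel set $\{f\in L^1_+ : H(f)\le -\ln\beta\}$ and that $\partial\mathcal D$ is the level set $\{f : H(f)=-\ln\beta\}$. The guiding idea is that a level set of a strictly convex functional cannot be convex as soon as it contains two distinct points: if $f_0\neq f_1$ both satisfy $H=-\ln\beta$, then any proper convex combination must lie \emph{strictly below} the level, hence outside $\partial\mathcal D$. Throughout I use $\beta\in(0,1)$, so that $-\ln\beta>0$ and the level set is a genuine ``sphere'' rather than the degenerate singleton $\{f\equiv 1\}$ obtained at $\beta=1$.

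The first step is to record the strict convexity of $H$. Since $\varphi(x):=x\ln x$ is strictly convex on $[0,\infty)$ (with the convention $\varphi(0)=0$), for any two densities $f_0\neq f_1$ and any $\lambda\in(0,1)$ one has the pointwise inequality $\varphi(\lambda f_1+(1-\lambda)f_0)\le \lambda\varphi(f_1)+(1-\lambda)\varphi(f_0)$, with strict inequality on the set $\{f_0\neq f_1\}$, which has positive probability. Integrating against $\mathbb P$ then gives $H(\lambda f_1+(1-\lambda)f_0)<\lambda H(f_1)+(1-\lambda)H(f_0)$. This is the only analytic input and it is entirely routine.

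The substantive step is to produce two \emph{distinct} densities lying on the level set $\{H=-\ln\beta\}$; this is where the main difficulty lies, since one must hit an arbitrary prescribed value $-\ln\beta>0$ while guaranteeing genuine distinctness. I would argue as follows. Fix a measurable set $A$ with $0<\mathbb P(A)=p<\beta$ and consider the two-valued densities $f=\alpha\,\mathbb I_A+\gamma\,\mathbb I_{A^c}$ subject to the normalization $\alpha p+\gamma(1-p)=1$; along this one-parameter family $H$ varies continuously from $0$ (at $\alpha=1$) up to $-\ln p$ (as $\gamma\downarrow 0$). Since $-\ln p>-\ln\beta$, the intermediate value theorem yields an admissible $\alpha$ with $H(f)=-\ln\beta$, and $f\in L^\infty_+$ as required. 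To obtain a second density with the same entropy I replace $A$ by another set $A'$ with $\mathbb P(A')=p$ and $\mathbb P(A\triangle A')>0$; such a set exists whenever the underlying space is atomless, a mild richness assumption consistent with the standing separability hypothesis on $L^1$. Because $f$ and the corresponding $f'$ have the same law they share the same value of $H$, yet they differ on a set of positive measure.

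Finally I combine the two steps. With $f_0:=f$ and $f_1:=f'$ both in $\partial\mathcal D$ and $f_0\neq f_1$, the strict convexity of $H$ gives $H\!\left(\tfrac12 f_0+\tfrac12 f_1\right)<-\ln\beta$, so the midpoint lies in the interior of $\mathcal D$ and not on $\partial\mathcal D$; hence $\partial\mathcal D$ fails to be convex, which is the claim. I expect the only delicate point to be the existence of the two equal-probability sets $A,A'$ with $\mathbb P(A\triangle A')>0$: on an atomless space this is immediate, and should the model space carry atoms one can instead distribute the value of $H$ across three atoms to manufacture two distinct densities with a common entropy, so the conclusion is unaffected.
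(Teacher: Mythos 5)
Your proof is correct and its engine is the same as the paper's: the strict convexity of $x\mapsto x\ln x$, integrated against $\mathbb P$, forces any proper convex combination of two distinct elements of the level set $\{H=-\ln\beta\}$ strictly below the level $-\ln\beta$, hence off $\partial\mathcal D$. Where you genuinely add something is the second step: the paper simply fixes ``any two functions $f$ and $g$ in $\partial\mathcal D$'' and computes, never verifying that $\partial\mathcal D$ contains two distinct elements (nor noting that its strict inequality $K''>0$ already presupposes $\mathbb P(f\neq g)>0$). This is not cosmetic: if $\partial\mathcal D$ were empty or a singleton it would be vacuously convex and the theorem would fail --- and indeed on a sufficiently poor (e.g.\ finitely atomic) space with $\beta$ small this can happen, so your explicit construction of two distinct two-valued densities hitting the level $-\ln\beta$ via the intermediate value theorem, together with the atomlessness caveat, is a necessary supplement rather than an optional embellishment. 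The trade-off is that your argument needs the richness hypothesis (atomless, or at least enough atoms to realize the level twice), which the paper's standing assumptions do not literally guarantee; the paper's version avoids stating this only by leaving the gap. One small point of rigor on your side: the ``distribute the value of $H$ across three atoms'' fallback is hand-waved and does not actually rescue the statement on arbitrary atomic spaces, so it would be cleaner to simply assume $(\Omega,\mathcal F,\mathbb P)$ atomless, which is standard in this literature.
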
 

\begin{proof}
It is sufficient to show that for any $\lambda \in [0,1]$ and any two functions $f$ and $g$ in $\partial \mathcal{D}$, the function $\lambda f + (1- \lambda) g$ is not in $\partial \mathcal{D}$. Define the function $H$ on the space of positive real line taking real values as follows, 
$$ H(x) := x \ln x,$$
 for all $x \in \mathbb{R}^+.$

It is clear that the function $H$ is strictly convex on the positive real line. Since, $H'(x)=  \ln x + 1$ and $H''(x) = \frac1x>0$ for all $x \in \mathbb{R}_+.$ Now again we define a new function $K$ on $[0,1]$  with its values in $\mathbb{R}$ by using the composition function $ H(\lambda f + (1- \lambda) g)$ as follows, 
$$K(\lambda) = E_{\mathbb P}(H(\lambda f + (1- \lambda) g)),$$ 
for the fixed functions $f$ and $g $ in $\partial \mathcal{D}$. Notice that we use a slight abuse of notation, here $H(\lambda f + (1- \lambda) g)$ is to be understood point-wise. That is, for $x\in \mathbb R$, the function $ H(\lambda f + (1- \lambda) g) \longrightarrow H(\lambda f(x) + (1- \lambda) g(x))$. 

If we take the first and second derivatives for the function $K$, we see that this function is strictly convex too. $K'(\lambda) = E_{\mathbb P}( (f-g) (H'(\lambda f + (1- \lambda) g))$ and 
$K''(\lambda) =  E_{\mathbb P}((f-g)^2 (H''(\lambda f + (1- \lambda) g)) = \frac{(f-g)^2}{\lambda f + (1- \lambda) g} >0$. Now, considering  $K(0) = E_{\mathbb P}(H(f))$ and $K(1) = E_{\mathbb P}(H(g))$  along with the strictly convexity of the function $K$, we come up with the inequality 

$$ K(\lambda) = E_{\mathbb P}(H(\lambda f + (1- \lambda) g)) < - \ln \beta~~~~ \forall \lambda \in (0,1).$$ 
This proves our assertion.
\end{proof}

The uniqueness of the so-called Euler allocation method is stated in the following result.
\begin{pro}\label{capital}
Let $(X^1, \dots ,X^d)$ be a vector such that each $X^i \in L^{\infty}(\Omega, \mathcal F)$, for $i=1,\dots,d$, represents the cash-flow or risk exposure from one risk position or department. We denote by $X = \sum_{i=1}^n X^i$ the portfolio-wide cash-flow produced over a given time-period. Furthermore, for a given risk measure $\rho$ on $L^{\infty}(\Omega,\mathcal F)$, define the function $f_{\rho}(u_1,\dots, u_n) = \rho( \sum_{i=1}^n u_i X^i)$. If $\rho$ is EVa$R_{1-\beta}$ as defined in $(\ref{def:EVAR})$, then the capital allocated to each department that satisfies Definition \ref{Capital} is determined uniquely by,
\begin{equation}\label{Euler}
K^i = \frac{d\rho}{dh}(X +hX^i)|_{h=0} =\frac{\partial}{\partial u_i}f_{\rho}(1,\dots,1) \qquad 1\leq i \leq n\;.
\end{equation}
\end{pro}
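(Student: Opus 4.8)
The plan is to chain together the structural results already established for general coherent risk measures (Theorem \ref{capital1}, Corollary \ref{uniqueness}, Theorem \ref{Riskcont}) and to specialize them to EVaR${}_{1-\beta}$ by exploiting the strict convexity of its determining set, which is exactly the content of Theorem \ref{convex}.

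First I would record that the determining set $\mathcal{D} = \{f \in L^1_+ \;|\; \mathbb{E}_{\mathbb{P}}[f\ln f] \leq -\ln\beta\}$ for EVaR${}_{1-\beta}$ is weakly compact. Convexity and weak closedness follow from the convexity and lower semicontinuity of the relative-entropy functional $f \mapsto \mathbb{E}_{\mathbb{P}}[f\ln f]$, while the uniform integrability needed for weak compactness follows from the entropy bound via the de la Vall\'ee--Poussin criterion (with $H(x)=x\ln x$ as the test function). Since each $X^i \in L^{\infty}(\Omega,\mathcal F)$, one also has $X^i \in L^1(\mathcal{D})$ trivially. This places us squarely within the hypotheses of Theorem \ref{capital1} and Theorem \ref{Riskcont}.

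Second, I would translate Theorem \ref{convex} into a strict-convexity statement for the problem of finding the extreme function of $X = \sum_i X^i$. The content of Theorem \ref{convex} is that any proper convex combination of two distinct densities in $\partial\mathcal{D}$ lands strictly inside $\mathcal{D}$, i.e. $\partial\mathcal{D}$ contains no line segment. Because the maximizer of the linear functional $f \mapsto \mathbb{E}_{\mathbb{P}}(-fX)$ over $\mathcal{D}$ must lie on $\partial\mathcal{D}$ (scaling an interior density toward the boundary strictly increases the objective), and because a constraint region with nowhere-flat boundary admits at most one maximizer of a nonzero linear functional, the set of extreme functions $\chi_{\mathcal{D}}\!\left(\sum_i X^i\right)$ is a singleton $\{f^*\}$; equivalently, the generator $G$ of $(X^1,\dots,X^d)$ is strictly convex in the sense of Corollary \ref{uniqueness}, which then yields that the capital allocation satisfying Definition \ref{Capital} is unique. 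With $\chi_{\mathcal{D}}(\sum_i X^i)=\{f^*\}$ and $\mathcal{D}$ weakly compact, Theorem \ref{Riskcont} applies verbatim and gives, for each index $i$,
$$K^i = \mathbb{E}_{\mathbb{P}}(-f^*X^i) = \mathop{\lim}_{\epsilon \downarrow 0} \frac{\rho(X + \epsilon X^i) - \rho(X)}{\epsilon}\;.$$
Uniqueness of $f^*$ makes $\rho$ Gateaux differentiable at $X$ along each direction $X^i$, so the one-sided limit is in fact the two-sided derivative $\frac{d\rho}{dh}(X + hX^i)|_{h=0}$; the final identity with $\frac{\partial}{\partial u_i}f_{\rho}(1,\dots,1)$ is then immediate from the definition $f_{\rho}(u_1,\dots,u_n)=\rho(\sum_i u_i X^i)$, since perturbing only the $i$-th coordinate away from $(1,\dots,1)$ reproduces exactly the curve $h \mapsto \rho(X + hX^i)$.

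I expect the main obstacle to be the rigorous passage from the non-flatness of $\partial\mathcal{D}$ (Theorem \ref{convex}) to uniqueness of the maximizing density $f^*$: a linear image of a strictly convex set need not itself be strictly convex, so one must argue at the level of the extreme-function problem rather than from the geometry of $G$ alone. The clean way around this is to observe that the maximizer of $\mathbb{E}_{\mathbb{P}}(-fX)$ over $\mathcal{D}$ is the Esscher (Gibbs) density $f^* \propto e^{-s^*X}$ associated with the optimal parameter $s^*$ attaining the infimum in $(\ref{EVAR})$, which is manifestly unique; this bypasses any delicate claim about $G$ being strictly convex as a set and makes the identification $K^i=\mathbb{E}_{\mathbb{P}}(-f^*X^i)$ fully explicit.
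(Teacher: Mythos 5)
Your proposal is correct and follows the same overall skeleton as the paper's proof (Theorem \ref{convex} $\Rightarrow$ uniqueness, then Theorem \ref{Riskcont} $\Rightarrow$ the Euler formula), but it differs in a substantive way at the crux, and the difference is to your advantage. The paper's proof asserts that Theorem \ref{convex} makes the generator $G$ of (\ref{setG}) strictly convex and then cites Corollary \ref{uniqueness}; as you observe, that inference is not automatic, since $G$ is the closed image of $\mathcal{D}$ under the linear map $f\mapsto\left(\mathbb{E}_{\mathbb P}(-fX^1),\dots,\mathbb{E}_{\mathbb P}(-fX^d)\right)$, and a linear image of a set with nowhere-flat boundary need not be strictly convex (nor need it have non-empty interior if the $X^i$ are linearly dependent). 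Your repair --- proving instead that the extreme-function set $\chi_{\mathcal{D}}\left(\sum_i X^i\right)$ is a singleton, either by the midpoint argument (two boundary maximizers of the linear functional $f\mapsto\mathbb{E}_{\mathbb P}(-fX)$ would have a midpoint that is again a maximizer yet lies strictly inside $\mathcal{D}$ by Theorem \ref{convex}) or explicitly via the Gibbs/Esscher form $f^*\propto e^{-s^*X}$ --- is exactly what Theorem \ref{Riskcont} needs as a hypothesis, so it plugs in cleanly. You also supply two verifications the paper uses tacitly: weak compactness of $\mathcal{D}$ (Dunford--Pettis via de la Vall\'ee--Poussin with $x\ln x$) and $X^i\in L^1(\mathcal{D})$ for bounded $X^i$. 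The one ingredient of the paper's proof you drop is Euler's theorem on homogeneous functions, used there to exhibit the full-allocation identity; this is harmless because Theorem \ref{capital1} already certifies that $\left(\mathbb{E}_{\mathbb P}(-f^*X^1),\dots,\mathbb{E}_{\mathbb P}(-f^*X^d)\right)$ is a fair allocation in the sense of Definition \ref{Capital}. If you write this up, state the non-degeneracy caveat explicitly: when $\sum_i X^i$ is a.s. constant, or when the infimum in (\ref{EVAR}) is not attained at a finite $s^*>0$, the maximizer need not be unique --- a corner case that the paper's argument also does not cover.
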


\begin{proof}
As a result of Theorem \ref{convex} we see that the associated set $G$ in $(\ref{setG})$  is strictly convex. i.e, the capital allocation for the vector $X = (X^1,\dots, X^d)$ is unique. Since, the risk measure EVaR$_{1-\beta}$ is positive homogeneous, i.e., for all $\lambda >0$ we have 
EVaR$_{1-\beta}(\lambda X) =$ $\lambda$ EVaR$_{1-\beta}(X)$, we deduce that the function $f_{\rho}$ above is a homogeneous function. So, by Euler's theorem on homogeneous functions we have 
\begin{equation*}
f_{\rho}(u_1,\dots,u_n) = \sum_{i=1}^n u_i \frac{\partial}{\partial u_i} f_{\rho}(u_1,\dots,u_n)\;.
\end{equation*}
Now, by applying Theorem \ref{Riskcont} and evaluating $\frac{\partial}{\partial u_i}f_{\rho}$ at $(1,\dots, 1)$, we deduce that the capital allocated to each department is given by $(\ref{Euler})$. This proves the theorem.
\end{proof}
Now, we are going to characterize the capital allocation satisfying Definition \ref{capital for process} with respect to CEVa$R_{1-\beta}$. Notice that this seems to be a more complicated problem since CEVa$R_{{}_{1-\beta}}$ is a risk measure defined on the space of stochastic processes $\mathcal R_L^p$. However, this is possible thanks to the cumulative property of CEVa$R_{{}_{1-\beta}}$. 

\begin{thm}\label{main_co}
Let $(X_t^1, \dots ,X_t^d)_{0\leq t\leq T}$ be a vector such that each $\left( X^i_t\right)_{0\leq t\leq T} \in  \mathcal R_L^{p}$ (for $i=1,\dots,d$) represents the cash-flow or risk exposure from one risk position or department at time $t\in[0,T]$. Then, the capital allocation satisfying Definition \ref{capital for process} over the period $[0,T]$, with respect to CEVa$R_{1-\beta}$, is determined uniquely for $i=1,\dots,d$ by,
$$ L^i = \frac 1T \int_0^T K_t^i dt\;,$$
where $K_t^i$ is given by $(\ref{Euler})$.
\end{thm}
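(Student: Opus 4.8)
The plan is to identify the set of capital allocations for CEVaR with a subgradient and then exploit the pointwise uniqueness already established for EVaR. First I would record the elementary but crucial reformulation: for a positively homogeneous, subadditive $\rho$, a vector $(L^1,\dots,L^d)$ satisfies the two conditions of Definition \ref{capital for process} if and only if $L$ is a subgradient at $e=(1,\dots,1)$ of the convex, positively homogeneous function $f(h):=\rho(\sum_{i=1}^d h^i X^i)$, $h\in\mathbb{R}^d_+$. Indeed, the full-allocation condition is the equality $\langle e,L\rangle = f(e)$, the linear-diversification condition is $\langle h,L\rangle\le f(h)$ for all $h\ge 0$, and together with positive homogeneity these are exactly the statement $L\in\partial f(e)$. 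Thus the capital allocation is unique precisely when $f$ is Gâteaux differentiable at $e$, in which case it equals $\nabla f(e)$.

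Next I would use the cumulative structure. Writing $g_t(h):=\text{EVaR}_{1-\beta}(\sum_{i=1}^d h^i X_t^i)$, the cumulative property (\ref{CEVAR_average}) gives $f(h)=\frac1T\int_0^T g_t(h)\,dt$. By Proposition \ref{capital} applied at each fixed $t$ to the random variables $X_t^1,\dots,X_t^d$, the Euler allocation for EVaR is unique, which is to say that each $g_t$ is convex and differentiable at $e$ with $\nabla g_t(e)=(K_t^1,\dots,K_t^d)$, the vector defined by (\ref{Euler}). The pointwise uniqueness itself is furnished by Theorem \ref{convex} together with Corollary \ref{uniqueness}.

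Existence is then immediate by integration. Setting $L^i=\frac1T\int_0^T K_t^i\,dt$, the full-allocation property follows from $\sum_i K_t^i = g_t(e)=\text{EVaR}_{1-\beta}(\sum_i X_t^i)$ integrated over $[0,T]$, while the linear-diversification property follows from $\sum_i h^i K_t^i\le g_t(h)$ (valid for each $t$ because $K_t$ is the EVaR allocation) integrated against $\omega\equiv 1/T$. For uniqueness I would differentiate under the integral sign: the directional derivative of $f$ at $e$ is $f'(e;v)=\lim_{\epsilon\downarrow 0}\frac1T\int_0^T\frac{g_t(e+\epsilon v)-g_t(e)}{\epsilon}\,dt$. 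Convexity makes each difference quotient monotone in $\epsilon$, so a monotone-convergence argument lets me pass the limit inside, yielding $f'(e;v)=\frac1T\int_0^T\langle\nabla g_t(e),v\rangle\,dt=\langle L,v\rangle$. Since $v\mapsto f'(e;v)$ is then linear, $f$ is differentiable at $e$, so $\partial f(e)=\{L\}$ is a singleton and the allocation is unique and equal to $L$.

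The main obstacle I anticipate is the analytic justification of differentiating under the integral sign, which hides two measurability and integrability points that must be checked: that $t\mapsto K_t^i$ is measurable, so that $L^i$ is well defined, and that the difference quotients $\epsilon^{-1}(g_t(e+\epsilon v)-g_t(e))$ are dominated uniformly enough in $t$ to guarantee $L^i<\infty$ and to legitimize the interchange of limit and integral. A secondary technical point is that Proposition \ref{capital} and the uniqueness machinery were stated on $L^\infty$, whereas for $p<\infty$ the time-slices $X_t^i$ lie only in $L^p$; I would need to confirm that the finiteness of the Laplace transform guaranteed by membership in $\mathcal{R}_L^p$ suffices for EVaR and its unique Euler allocation to be defined at almost every $t$. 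Once these points are settled, the singleton conclusion for $\partial f(e)$ delivers the formula and its uniqueness simultaneously.
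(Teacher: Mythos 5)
Your proof is correct and, at the top level, follows the same route as the paper: exploit the identity $\mathrm{CEVaR}_{1-\beta}\bigl(\sum_i h^i X^i\bigr)=\frac1T\int_0^T \mathrm{EVaR}_{1-\beta}\bigl(\sum_i h^i X^i_t\bigr)\,dt$ to reduce the allocation problem to the pointwise one settled by Proposition \ref{capital}. The difference is that the paper's proof stops at the assertion that the problem ``reduces'' to the EVaR case; that assertion delivers existence (integrating the two defining properties of $K_t$ over $[0,T]$, exactly as you do), but it does not by itself deliver uniqueness, since a priori the set of fair allocations for the integrated functional $f(h)=\frac1T\int_0^T g_t(h)\,dt$ could be strictly larger than the single vector obtained by integrating the pointwise allocations. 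Your subgradient reformulation closes precisely this gap: identifying the fair allocations with $\partial f(e)$, observing that pointwise uniqueness means each $g_t$ is differentiable at $e$ with gradient $(K_t^1,\dots,K_t^d)$, and passing the monotone difference quotients through the integral to conclude that $f$ itself is differentiable at $e$, so that $\partial f(e)=\{L\}$. This is a genuine strengthening of the printed argument. The technical caveats you flag are real and are left unaddressed in the paper as well: measurability and integrability of $t\mapsto K^i_t$ (needed both for $L^i$ to be well defined and to initialize the monotone-convergence step), and the mismatch between Theorem \ref{capital1} and Proposition \ref{capital}, which are stated for $L^\infty$ random variables, and the time-slices $X^i_t$ of a process in $\mathcal R^p_L$, which for $p<\infty$ need only lie in $L^p$ with finite Laplace transform.
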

\begin{proof}
By replacing the representation for CEVa$R_{1-\beta}$ given in equation $(\ref{CEVAR_average})$ into Definition \ref{capital for process}, we can see that the capital allocation reduces to the one for EVa$R_{1-\beta}$ given in Definition \ref{Capital}. Equation $(\ref{Euler})$ immediately implies the result.
\end{proof}
Theorem \ref{main_co}, gives us a solution to the problem of capital allocation for stochastic processes over a finite time period $[0,T]$. Interesting enough, unlike other solutions to this problem, this capital allocation can be readily computed for a large family of processes. Now, we turn our attention to an application of our results. 

\subsection{Capital Allocation for Insurance L\'evy Risk Processes}


We now apply Theorem \ref{main_co} to give an answer to the capital allocation problem for an insurance risk process. We consider here an insurance company consisting of $n$ departments. For each department, we let $R_t^i$ be a risk reserve process of the form (\ref{risk}). In other words, $R_t^i=x^i-Y_t^i$ where $Y_t^i=X_t^i-c^it$ denotes the net-loss claim process related to the $i$th department. We recall that $x^i$ is the initial reserve, $c^i$ is the loaded premium and $X_t^i$ is a model for the aggregate claims while the index $i$ refers to one of the $n$ departments. In order to allow for a more rich description of an insurance company, we think of the aggregate claims process $X_t^i$ as the aggregate amount paid out by the department $i$ which is composed of fractions of $m$ independent classes of claims. That is, let $W_t^1, \dots, W_t^m $ be $m$ independent spectrally positive L\'evy process modeling aggregate claims of $m$ different types. One can think for instance of claims associated with car accidents, home damage, medical insurance, etc. Then, the aggregate claims $X_t^i$ paid out by the $i$th department would be a linear combination of some of these $W_t^m$ claims processes. For example, consider aggregate claims produced by a car insurance contract. We suppose that one department will pay out property damage coverage (a fraction of the aggregate claims from the contract) while another department will pay out third-party liability costs (another fraction of the aggregate claims from the contract). 
  
Mathematically, we let $W_t^1, \dots, W_t^m $ be $m$ independent spectrally positive L\'evy processes for $j=1,\dots,m$. Now, we let each $X_t^i$ to be a linear combination of some, or all, of the $W_t^1, \dots, W_t^m $, i.e.
\begin{equation}\label{matrix}
X = \left(\begin{array}{c}
  X_t^1 \\
  X_t^2\\
  \vdots\\
  X_t^n
\end{array}\right) =\left(
\begin{array}{ccc}
  a_{11} & \dots& a_{1m}\\
  a_{21} & \dots& a_{2m} \\
  \vdots & \dots& \vdots\\
  a_{n1} & \dots & a_{nm}
\end{array}\right) \left(\begin{array}{c}
  W_t^1 \\
  W_t^2\\
  \vdots\\
  W_t^m
\end{array}\right) \;, 
\end{equation}
where $a_{ij}$'s are non-negative real numbers for $1\leq i\leq n$ and $1\leq j\leq m$. 

We point out that we chose this structure because it admits a neat solution for the capital allocation problem through Theorem \ref{main_co}. One can always fall back on the more simple case where each department pays out one, and only one, type of claims as oppose to paying fractions of different types of claims. This would correspond to having $n=m$ and a diagonal matrix in (\ref{matrix}) with all elements in the diagonal equal to one yielding $X_t^i=W_t^i$ for all $i$. We also point out that this construction endows the processes $R^i$'s with a dependence structure through the aggregate claims $X^i$'s. The next result is one of the main contribution of our paper.

\begin{thm}\label{thm20}
Consider $n$ risk processes such that $\left( R_t^i \right)_{0\leq t \leq T} \in \mathcal R^p_L$, for $i=1,\dots,n$. Now, let such $R_t^i=x^i-Y_t^i$ where $Y_t^i=X_t^i-c^it$ denotes the net-loss claim process related to the $i$th department. Moreover, let the aggregate risk processes $X^i_t$ be those defined in $(\ref{matrix})$. Then the capital allocation that satisfies Definition \ref{capital for process} over the time period $[0,T]$, for each net-loss process $Y_t^i$ and with respect to the risk measure CEVa$R_{1-\beta}$ is, 
\begin{equation}\label{Main1}
L^i = \frac 1T \int_0^T K_t^i dt + c^i \frac T2, 
\end{equation}
\noindent where 
\begin{equation}\label{Main2}
K_t^i = -t \sum_{j=1}^m a_{ij} \phi_j'(s^*\sum_{k=1}^n a_{kj}) \;,\qquad t\in[0,T]\;,
\end{equation}
and $\mathbb E(e^{-sW_1^j}) = e^{-\phi_j(s)}$ for $s\geq 0$, $\phi'_j(0)\leq0$,  $1\leq i\leq n$ and $1\leq j\leq m$.
\end{thm}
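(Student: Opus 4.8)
The plan is to reduce the entire problem to the single-time Euler allocation supplied by Theorem \ref{main_co} and Proposition \ref{capital}, and then to carry out the one explicit differentiation that the L\'evy structure makes possible. First I would invoke Theorem \ref{main_co}: the capital allocation with respect to CEVa$R_{1-\beta}$ is the time-average $\frac1T\int_0^T(\text{per-time Euler allocation})\,dt$ of the EVa$R_{1-\beta}$ Euler allocation applied to the net-loss vector $(Y_t^1,\dots,Y_t^n)$ at each fixed $t$. Since $Y_t^i=X_t^i-c^it$ differs from $X_t^i$ by the deterministic constant $-c^it$, the translation-invariance property established in the proof of Theorem \ref{Coherent} peels off this drift: at fixed $t$, $\rho\big(\sum_k u_k Y_t^k\big)=\rho\big(\sum_k u_k X_t^k\big)+t\sum_k u_k c^k$, so the Euler derivative in $u_i$ at $u=(1,\dots,1)$ picks up an extra summand $c^it$. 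Integrating $c^it$ over $[0,T]$ and dividing by $T$ produces precisely the term $c^i\,T/2$ of (\ref{Main1}), leaving $K_t^i$ to be the Euler allocation of the pure aggregate-claims part.

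Next I would compute that remaining piece. Writing $\sum_{k=1}^n u_k X_t^k=\sum_{j=1}^m b_j(u)\,W_t^j$ with $b_j(u):=\sum_{k=1}^n u_k a_{kj}$, independence of $W^1,\dots,W^m$ factorises the Laplace transform, and the L\'evy property yields $\mathbb E[e^{-sW_t^j}]=e^{-t\phi_j(s)}$. Hence $\ln\mathbb E[\exp(-s\sum_k u_k X_t^k)]=-t\sum_{j=1}^m\phi_j\big(s\,b_j(u)\big)$, and substituting into (\ref{EVAR}) gives
\[
\mathrm{EVaR}_{1-\beta}\Big(\sum_{k=1}^n u_k X_t^k\Big)=\inf_{s>0}\frac{-t\sum_{j=1}^m\phi_j\big(s\,b_j(u)\big)-\ln\beta}{s}.
\]

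The heart of the argument is differentiating this infimum in $u_i$. By Proposition \ref{capital} (whose hypothesis is secured by the non-convexity of $\partial\mathcal D$ from Theorem \ref{convex}, hence strict convexity of $G$ and a singleton extreme set), the Euler derivative exists and coincides with the allocation, and the infimum in $s$ is attained at a unique interior $s^*>0$. The envelope theorem then applies: because $\partial/\partial s$ of the objective vanishes at $s^*$, the total $u_i$-derivative of EVaR equals the partial $u_i$-derivative of the objective evaluated at $s^*$. Computing that partial derivative with $\partial b_j/\partial u_i=a_{ij}$, the factor $1/s$ cancels the chain-rule factor $s$ and I obtain
\[
K_t^i=\frac{\partial}{\partial u_i}\,\mathrm{EVaR}_{1-\beta}\Big(\sum_{k=1}^n u_k X_t^k\Big)\Big|_{u=\mathbf{1}}=-t\sum_{j=1}^m a_{ij}\,\phi_j'\Big(s^*\sum_{k=1}^n a_{kj}\Big),
\]
which is (\ref{Main2}); assembling the two pieces and integrating over $[0,T]$ recovers (\ref{Main1}).

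I expect the main obstacle to be the rigorous justification of this envelope/differentiation step: showing that $\inf_{s>0}$ is attained at a single interior minimiser $s^*$ depending smoothly on $u$ near $\mathbf{1}$, and that the per-time Euler derivative really is the capital-allocation component. This is exactly where the uniqueness from Theorem \ref{convex} and Proposition \ref{capital} is indispensable. Concretely one must check the boundary behaviour of the objective (as $s\to0^+$ the term $-\ln\beta/s\to+\infty$ since $\beta\in(0,1)$, and the large-$s$ behaviour together with the sign condition $\phi_j'(0)\le0$ force an interior minimiser) and confirm that differentiation may be exchanged with the integral $\int_0^T\,dt$.
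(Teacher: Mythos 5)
Your proposal is correct and follows essentially the same route as the paper's proof: reduce to the per-time EVaR Euler allocation via Theorem \ref{main_co} and Proposition \ref{capital}, strip the premium drift by translation invariance to produce the $c^i T/2$ term, factorise the Laplace transform over the independent $W^j$, and differentiate the infimum in $u_i$. The only cosmetic difference is that you invoke the envelope theorem where the paper carries out the cancellation of the $\partial s^*/\partial u_i$ terms explicitly using the first-order condition in $s$ --- these are the same computation.
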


\begin{proof}
First we want to find the capital allocation with respect to the risk measure EVa$R_{1-\beta}$ before applying Theorem \ref{main_co}. For any coherent risk measure $\rho$ defined on $L^{\infty}(\Omega, \mathcal F)$, we have, by the cash-invariant property, that, for each $t\in [0,T]$,
$$ \rho(\sum_{i=1}^n Y_t^i) = \rho(\sum_{i=1}^n X_t^i) +\sum_{i=1}^{n}c^i\,t\;.$$
That is, in order to find the capital allocation (at $t\in[0,T]$) in this setting with respect to a coherent risk measure (in particular for EVa$R_{1-\beta}$), we just need to find the capital allocation for each claim process $X_t^i$.

For a given coherent risk measure $\rho$ on $L^{\infty}(\Omega, \mathcal F)$, let us define the function $f_{\rho}(u_1,\dots, u_n) := \rho( \sum_{i=1}^n u_i X_t^i)$. Taking into account the structure of the processes $X_t^1, \dots, X_t^n $, we can write, for $t\in[0,T]$,
\begin{eqnarray*}
f_{\rho}(u_1,\dots, u_n) &=& \rho( \sum_{i=1}^n u_i X_t^i)\\
&=& \rho\left((\sum_{j=1}^m u_1 a_{1j} W_t^j) +(\sum_{j=1}^m u_2 a_{2j} W_t^j) + \dots + (\sum_{j=1}^m u_n a_{nj} W_t^j) \right)\\
&=&  \rho\left((\sum_{i=1}^n u_i a_{i1}) W_t^1 +(\sum_{i=1}^n u_i a_{i2}) W_t^2 + \dots + (\sum_{i=1}^n u_i a_{in}) W_t^m\right). 
\end{eqnarray*}

If we let  
\begin{equation}\label{d_j}
d_j = \sum_{k=1}^n u_k a_{kj},
\end{equation}
 we can write a more compact form,\begin{equation}\label{f_compact}
f_{\rho}(u_1,u_2,\dots, u_n) = \rho( d_1W_t^1+ d_2 W_t^2 + \dots + d_mW_t^m)\;.
\end{equation}

By using the independence of principal factors $W^i$, we have, for $t\in[0,T]$,
\begin{eqnarray*}
\ln \left( \mathbb E(e^{-s(d_1W_t^1+ d_2 W_t^2 + \dots + d_mW_t^m)})\right) &=& \ln \left( \Pi_{j=1}^m \mathbb E(e^{-sd_j W_t^j})\right)\\
\sum_{j=1}^m \ln \left(\mathbb E(e^{-sd_j W_t^j})\right)&=& -t\sum_{j=1}^m \phi_j(sd_j) \;,
\end{eqnarray*}
where the last equality comes from $\mathbb E(e^{-sW_t^j}) = e^{-t\phi_j(s)}$. 

If we specialize the above equations to the case of EVaR, then equation (\ref{f_compact}) becomes, for $t\in[0,T]$,

\begin{equation}\label{relation2}
f_{EVaR_{1-\beta}}(u_1,u_2,\dots, u_n) = EVaR_{1-\beta}( d_1W_t^1+ d_2 W_t^2 + \dots + d_mW_t^m) = \mathop{\inf}_{s\geq 0} \frac{-t\sum_{j=1}^m \phi_j(sd_j) - \ln\beta}{s}\;.
\end{equation}
Now, consider the right-hand side of equation (\ref{relation2}). By taking derivatives with respect to $s$ 
we have, for $t\in[0,T]$,

\begin{equation}\label{relation3}
\frac{\partial}{\partial s}\left(\frac{-t\sum_{j=1}^m \phi_j(sd_j) - \ln\beta}{s} \right) = \frac{-st\sum_{j=1}^m d_j\phi_j'(sd_j) +t\sum_{j=1}^m \phi_j(sd_j) + \ln\beta}{s^2}.
\end{equation}

By setting equation (\ref{relation3}) equal to zero, we can find the value $s^*(t,u^1,\dots,u^n)$ that minimizes the right-hand side in (\ref{relation2}). As indicated by the notation, this minimum value $s^*(t,u^1,\dots,u^n)$ is a function of $t$ and $u_i$ for $1\leq i\leq n$ but in the following we use the more simple notation $s^*$ for this value. Notice that the value $s^*$ is  in fact the infimum too. Based on convexity property of Laplace transform for one-sided L\' evy processes and the condition $\phi'_j(0)\leq 0$, the infimum in $(\ref{relation2})$ should be reached at some point we denote $s^*$ (see \cite{Kyprianou}). 

According to Proposition \ref{capital}, the {\it Euler} allocation is the only possible allocation method for EVa$R_{1-\beta}$. So, in order to find the capital allocation, it is sufficient to find the derivative of the right-hand side of equation (\ref{relation2}) with respect to the variable $u_i$ and evaluate it at the point $u = (1,1,\dots, 1)$. 
Straight-forward differentiation yields, for $i=1,\dots,n$ and $t\in[0,T]$,
\begin{equation}\label{relation4}
\frac{\partial }{\partial u_i} f_{EVaR_{1-\beta}}(u_1, u_2,\dots, u_n)= \frac{-s^* t\sum_{j=1}^m( s^*_id_j + a_{ij} s^*)\phi_j'(s^* d_j) +ts^*_i \sum_{j=1}^m \phi_j(s^*d_j) + s^*_i\ln\beta}{s^{*2}}\;,
\end{equation}
where we use the notation $s^*_i = \frac{\partial s^*}{\partial u_i}$. 

Since $s^*$ is the solution of setting equation (\ref{relation3}) equal to zero, we can simplify $(\ref{relation4})$ as follows, for $i=1,\dots,n$,
\begin{equation}\label{relation5}
\frac{\partial }{\partial u_i} f_{EVaR_{1-\beta}}(u_1, u_2,\dots, u_n)= -t\sum_{j=1}^m a_{ij}\phi_j'(s^*d_j)\;.
\end{equation}

Evaluating equation (\ref{relation5}) at the point $u = (1,1,\dots, 1)$ yields the allocated capital associated to the $i^{th}$ department at time $t\in [0,T]$. Namely, for $i=1,\dots,n$,
\begin{equation}\label{relation6}
K_t^i = \frac{\partial }{\partial u_i} f_{EVaR_{1-\beta}}(u_1, u_2,\dots, u_n)= -t\sum_{j=1}^m a_{ij}\phi_j'(s^*\sum_{k=1}^n a_{kj})\;.
\end{equation}

Using Theorem \ref{main_co} and integrating $K_t^i$ in $(\ref{relation6})$ yields the allocated capital satisfying Definition \ref{capital for process} with respect to the risk measure CEVa$R_{\beta}$. Thus, the allocated capital to $i^{th}$ department over the period $[0,T]$ with respect to CEVa$R_{1-\beta}$ is, 
$$L^i = \frac 1T \int_0^T K_t^i dt + c^i \frac T2\;.$$
This completes the proof.
\end{proof}

\section{Examples}

%

In this section, we are interested in examining Theorem \ref{thm20} for some examples in order to illustrate how this capital allocation can be computed. We present capital allocations for the examples already discussed in Section \ref{sec:examples}. 

As we will see, there are some cases for which we can obtain an explicit expression for the capital allocation. In others, such an explicit form is not available but a solution can still be obtained by standard numerical methods. The difficulty lies in solving the equation (\ref{relation3}) when is set to be equal to zero. 

\subsection{Brownian Motion with Scale Parameter}

Consider the general set-up defined through equation (\ref{matrix}). Let the principal factors  $W_t^1, \dots, W_t^m $ to be $m$ independent Brownian motions with different scale parameters $\sigma_i>0$ and  Laplace transform $\mathbb{E}(e^{-s W_t^i}) = e^{\frac12 \sigma_i^2 s^2 t}$.
We now only need to apply Theorem \ref{thm20}. By solving equation (\ref{relation3}) equal to zero we get, for $t\in [0,T]$,
\begin{equation}\label{Brownian3}
s^2t \sum_{j=1}^m d_j^2 \sigma_j^2 - \frac 12 s^2 t  \sum_{j=1}^m d_j^2 \sigma_j^2 + \ln\beta = 0,
\end{equation}
where $d_j$ is given in $(\ref{d_j})$.  Or equivalently,
\begin{equation}\label{Brownian4}
s^* = \left( \frac{-2 \ln\beta}{ t \sum_{j=1}^m d_j^2 \sigma_j^2}\right)^{\frac12}\;.
\end{equation}
Substituting  (\ref{Brownian4}) into equation (\ref{relation6}) at the point $u = (1,1,\dots,1)$ we can compute the value $K_t^i$ for $i=1,\dots,n$. That is,

\begin{equation}\label{Brownian5}
K_t^i = t^{\frac12} \left( \frac{-2 \ln\beta}{\sum_{j=1}^m\sigma_j^2  (\sum_{k=1}^n a_{kj})^2}\right)^{\frac12} \sum_{j=1}^m \sigma_j^2 a_{ij} \sum_{k=1}^n a_{kj}\;,
\end{equation}
for $t\in [0,T]$. Thus, the allocated capital to the $i^{th}$ department with respect to CEVa$R_{1-\beta}$ can be computed to be,
\begin{equation}\label{Brownian6}
L^i = \frac23 T^{\frac12} \left( \frac{-2 \ln\beta}{\sum_{j=1}^m\sigma_j^2  (\sum_{k=1}^n a_{kj})^2}\right)^{\frac12} \sum_{j=1}^m \sigma_j^2 a_{ij} \sum_{k=1}^n a_{kj}+ c^i\frac T2.
\end{equation}

Now as a special case, let the principal factors $W_t^1, \dots, W_t^m $ to be $m$ independent Brownian motions with common scale parameter $\sigma>0$ and common Laplace transform $\mathbb{E}(e^{-s W_t^i}) = e^{\frac12 \sigma^2 s^2 t}$. 
So, $(\ref{Brownian4})$ reduces to

\begin{equation}\label{Brownian1}
s^* = \left( \frac{-2 \ln\beta}{\sigma^2 t \sum_{j=1}^m d_j^2}\right)^{\frac12}\;,
\end{equation}
and the value $K_t^i$ is then, for $t\in [0,T]$, 

\begin{equation}\label{Brownian2}
K_t^i = \sigma t^{\frac12} \left( \frac{-2 \ln\beta}{\sum_{j=1}^m (\sum_{k=1}^n a_{kj})^2}\right)^{\frac12} \sum_{j=1}^m a_{ij} \sum_{k=1}^n a_{kj}\;.
\end{equation}
Thus, the allocated capital, $L^i$, to the $i^{th}$ department satisfying Definition \ref{capital for process} with respect to CEVa$R_{1-\beta}$ for this special case can be written as,
\begin{equation}\label{Brownian3}
L^i = \frac23 T^{\frac12} \sigma \left( \frac{-2 \ln\beta}{\sum_{j=1}^m (\sum_{k=1}^n a_{kj})^2}\right)^{\frac12} \sum_{j=1}^m a_{ij} \sum_{k=1}^n a_{kj} + c^i\frac T2.
\end{equation}

\subsection{Gamma Subordinator}
 Consider the general set-up defined through equation (\ref{matrix}). We let the principal factors $W_t^1, \dots, W_t^m $ to be $m$ independent gamma
processes with different parameters $\alpha_i, b_i > 0$ and Laplace transform 
\begin{equation}\label{19}
\mathbb{E}(e^{-s W_t^i}) = \left(1 + \frac sb_i\right)^{-\alpha_it} = \exp{\left[-t\alpha_i \ln \left( 1+ \frac sb_i\right) \right]}\;,\qquad s\geq 0\;.
\end{equation}



We now only need to apply Theorem \ref{thm20}. By solving equation (\ref{relation3}) equal to zero we get, for $t\in [0,T]$,
\begin{equation}\label{relation9}
t\sum_{j=1}^{m} \alpha_j \left(\ln(1+\frac{sd_j}{b_j}) -s \frac{d_j}{b_j+sd_j}\right) + \ln\beta = 0\;, \quad s>0\;,
\end{equation}
where $d_j$ is given in $(\ref{d_j})$. This is not as straight-forward as the equivalent equation for the previous example. Nonetheless, the value $s^*$ satisfying $(\ref{relation9})$ can be obtained numerically. Evaluating at the point $u = (1,1,\dots,1)$ and substituting into $(\ref{relation6})$ yields the capital allocation value $K_t^i$ for $i=1,\dots,n$. That is,

\begin{equation}\label{relation10}
K_t^i = -t\sum_{j=1}^m  a_{ij} \left(\frac{\alpha_j}{b_j+ s^* \sum_{k=1}^n a_{kj}}\right)\;,
\end{equation}
for $t\in [0,T]$ and where $s^*$ is the solution of equation (\ref{relation9}). Thus, the allocated capital to the $i^{th}$ department satisfying Definition \ref{capital for process} with respect to CEVa$R_{1-\beta}$ is given by,
\begin{equation}\label{relation10}
L^i = -a \frac T2\sum_{j=1}^m a_{ij} \left(\frac{\alpha_j}{b_j+ s^* \sum_{k=1}^n a_{kj}}\right) + c^i \frac T2 \;,
\end{equation}
for $1\leq i\leq n$.

\subsection{$\alpha$-stable Subordinator}
Consider the general set-up defined through equation (\ref{matrix}). We let the principal factors $W_t^1, \dots, W_t^m $ to be $m$ independent $\alpha$-stable processes with different parameter $\alpha_i\in (0,1)$ and Laplace transform 

\begin{equation}\label{alpha}
\mathbb{E}(e^{-s W_t^i}) = e^{-s^{\alpha_i}}\;,\qquad s\geq 0\;.
\end{equation}

We now only need to apply Theorem \ref{thm20}. By solving equation (\ref{relation3}) equal to zero we get the following equation, for $t\in [0,T]$ ,
\begin{equation}\label{alpha_2}
-t \sum_{j=1}^m\alpha_j (s d_j)^{\alpha_j} +t\sum_{j=1}^m  (s d_j)^{\alpha_j} + \ln\beta= 0\;,
\end{equation}

where $d_j$ is given in $(\ref{d_j})$.

Once again, this is not a straight-forward equation to solve. Nonetheless, the value $s^*$ satisfying $(\ref{alpha_2})$ can be obtained numerically. Evaluating at the point $u = (1,1,\dots,1)$ and substituting into $(\ref{relation6})$ yields the capital allocation value $K_t^i$ for $i=1,\dots,n$ and for all $t \in [0,T]$.

As a special case, consider the principal factors let $W_t^1, \dots, W_t^m $ to be $m$ independent $\alpha$-stable processes with common parameter $\alpha\in (0,1)$ and common Laplace transform 
\begin{equation}\label{17}
\mathbb{E}(e^{-s W_t^i}) = e^{-s^{\alpha}}\;,\qquad s\geq 0\;.
\end{equation}

We now only need to apply Theorem \ref{thm20}. By solving equation (\ref{relation3}) equal to zero we get, for $t\in [0,T]$ ,
$$-s^{\alpha}t\alpha \sum_{j=1}^m (d_j)^{\alpha} +s^{\alpha}t\sum_{j=1}^m  (d_j)^{\alpha} + \ln\beta= 0,$$
In this case, a solution can be readily obtained yielding,
\begin{equation}\label{relation7}
s^* =\left ( \frac{- \ln\beta}{(1-\alpha)t \sum_{j=1}^m d_j^{\alpha}}\right )^{\frac{1}{\alpha}}\;,
\end{equation}
for $t\in [0,T]$. Substituting (\ref{relation7}) into (\ref{relation6}) at the point $u = (1,1,\dots,1)$ yields the capital allocation value $K_t^i$ for $i=1,\dots,n$ at time $t\in [0,T]$. That is,
\begin{equation}\label{relation8}
K_t^i = -t^{\frac1{\alpha}}\alpha \left ( \frac{- \ln\beta}{(1-\alpha) \sum_{j=1}^m (\sum_{k=1}^n a_{kj})^{\alpha}}\right )^{\frac{\alpha -1}{\alpha}} \sum_{j=1}^m a_{ij} (\sum_{k=1}^n a_{kj})^{\alpha - 1}\;,
\end{equation}
for $1\leq i \leq n$. The capital allocation satisfying Definition \ref{capital for process} with respect to CEVa$R_{1-\beta}$ is given by, 
$$L^i = \frac 1T \int_0^T K_t^i dt + c^i\frac T2\;,$$
or more precisely,
\begin{equation}\label{relation8}
L^i = -\frac{\alpha^2}{\alpha +1} T^{\frac1{\alpha}} \left ( \frac{- \ln\beta}{(1-\alpha) \sum_{j=1}^m (\sum_{k=1}^n a_{kj})^{\alpha}}\right )^{\frac{\alpha -1}{\alpha}} \sum_{j=1}^m a_{ij} (\sum_{k=1}^n a_{kj})^{\alpha - 1}+c^i\frac{T}{2}\;.
\end{equation}


\end{document}